\title{Towards a learning-theoretic analysis of\\ spike-timing dependent plasticity}
\newenvironment{thmtag}[2][Theorem]{\begin{trivlist}
\item[\hskip \labelsep {\bfseries #1}\hskip \labelsep {\bfseries #2}]}{\end{trivlist}}
\newcommand{\eod}{{${}$\\}}
\newcommand{\X}{{\mathbf X}}
\newcommand{\x}{{\mathbf x}}
\newcommand{\Xn}{{\mathbf{A}}}
\newcommand{\Yn}{{B}}
\newcommand{\xn}{{\mathbf{a}}}
\newcommand{\yn}{{b}}
\newcommand{\wt}{\mathbf{w}}
\newcommand{\bE}{{\mathbb E}}
\newcommand{\bV}{{\mathbb V}}
\newcommand{\bO}{{\mathbbm 1}}
\newcommand{\bR}{{\mathbb R}}
\newcommand{\cF}{{\mathcal F}}
\newcommand{\cA}{{\mathcal A}}
\newcommand{\cL}{{\mathcal L}}
\newtheorem{thm}{Theorem}
\newtheorem{cor}[thm]{Corollary}
\newtheorem{lem}[thm]{Lemma}
\newtheorem{rem}{Remark}
\newtheorem{defn}{Definition}
\author{David Balduzzi\\
MPI for Intelligent Systems, T{\"u}bingen, Germany\\
ETH Zurich, Switzerland\\
\texttt{david.balduzzi@inf.ethz.ch}
\And
Michel Besserve\\
MPI for Intelligent Systems and MPI for Biological Cybernetics\\
T{\"u}bingen, Germany\\
\texttt{michel.besserve@tuebingen.mpg.de}}
\begin{document}
\maketitle

\begin{abstract}
%A single neuron learns from a fraction of the data available to the brain as a whole. 
This paper suggests a learning-theoretic perspective on how synaptic plasticity benefits global brain functioning. We introduce a model, the selectron, that {\rm (i)} arises as the fast time constant limit of leaky integrate-and-fire neurons equipped with spiking timing dependent plasticity (STDP) and {\rm (ii)} is amenable to theoretical analysis. We show that the selectron encodes reward estimates into spikes and that an error bound on spikes is controlled by a spiking margin and the sum of synaptic weights. Moreover, the efficacy of spikes (their usefulness to other reward maximizing selectrons) also depends on total synaptic strength. Finally, based on our analysis, we propose a regularized version of STDP, and show the regularization improves the robustness of neuronal learning when faced with multiple stimuli.
\end{abstract}

%%%%%%%%%%%%%%%%%%%%%%%%%%%%%%%%%%%%%%%%%%%%%%%%%%%%%%%%%%%%%%%%%%%%
\section{Introduction}

Finding principles underlying learning in neural networks is an important problem for both artificial and biological networks. An elegant suggestion is that global objective functions may be optimized during learning \cite{friston:06}. For biological networks however, the currently known neural plasticity mechanisms use a very restricted set of data -- largely consisting of spikes and diffuse neuromodulatory signals. How a \emph{global} optimization procedure could be implemented at the \emph{neuronal} (cellular) level is thus a difficult problem.

A successful approach to this question has been Rosenblatt's perceptron \cite{Rosenblatt:1958fk} and its extension to multilayer perceptrons via backpropagation \cite{rumelhart:86}. Similarly, (restricted) Boltzmann machines, constructed from simple stochastic units, have provided a remarkably powerful approach to organizing distributed optimization across many layers \cite{hinton:06a}. By contrast, although there has been significant progress in developing and understanding more biologically realistic models of neuronal learning \cite{song:00, seung:03, bohte:05, legenstein:06, buesing:07, legenstein:08a}, these do not match the performance of simpler, more analytically and computationally tractable models in learning tasks. 

\paragraph{Overview.}
This paper constructs a bridge from biologically realistic to analytically tractable models. The \emph{selectron} is a model derived from leaky integrate and fire neurons equipped with spike-timing dependent plasticity that is amenable to learning-theoretic analysis. Our aim is to extract some of the principles implicit in STDP by thoroughly investigating a limit case. 

Section \S\ref{s:selectron} introduces the selectron. We state a constrained reward maximization problem which implies that selectrons encode empirical reward estimates into spikes. Our first result, section \S\ref{s:relations}, is that the selectron arises as the fast time constant limit of well-established models of neuronal spiking and plasticity, suggesting that cortical neurons may also be encoding reward estimates into their spiketrains. 

Two important questions immediately arise. First, what guarantees can be provided on spikes being reliable predictors of global (neuromodulatory) outcomes? Second, what guarantees can be provided on the usefulness of spikes to other neurons? Sections \S\ref{s:learning} and \S\ref{s:code} answer these questions by providing an upper bound on a suitably defined $0/1$ loss and a lower bound on the \emph{efficacy} of a selectron's spikes, measured in terms of its contribution to the expected reward of a downstream selectron. Both bounds are controlled by the sum of synaptic weights $\|\wt\|_1$, thereby justifying the constraint introduced in \S\ref{s:selectron}. Finally, motivated by our analysis, \S\ref{s:experiments} introduces a \emph{regularized STDP} rule and shows that it learns more robustly than classical STDP. \S\ref{s:discussion} concludes the paper. Proofs of theorems are provided in the supplementary material.

%%%%%%%%%%%%%%%%%%%%%%%%%%%%%%%
\paragraph{Related work.}
Spike-timing dependent plasticity and its implications for the neural code have been intensively studied in recent years. The work closest in spirit to our own is Seung's ``hedonistic'' synapses, which seek to increase average reward \cite{seung:03}. Our work provides guarantees on the finite sample behavior of a discrete-time analog of hedonistic neurons. Another related line of research derives from the information bottleneck method \cite{tishby:99, buesing:07} which provides an alternate constraint to the one considered here. An information-theoretic perspective on synaptic homeostasis and metabolic cost, complementing the results in this paper, can be found in \cite{bt:12, bob:12}. Simulations combining synaptic renormalization with burst-STDP can be found in \cite{Nere:2012fk}.

Important aspects of plasticity that we have not considered here are properties specific to continuous-time models, such as STDP's behavior as a temporal filter \cite{schmiedt:10}, and also issues related to convergence \cite{legenstein:06, legenstein:08a}.

The learning-theoretic properties of neural networks have been intensively studied, mostly focusing on perceptrons, see for example \cite{anthony:99}. A non-biologically motivated ``large-margin'' analog of the perceptron was proposed in \cite{freund:99a}.

%%%%%%%%%%%%%%%%%%%%%%%%%%%%%%%%%%%%%%%%%%%%%%%%%%%%%%%%%%%%%%%%%%%%
\section{The selectron}
\label{s:selectron}

We introduce the selectron, which can be considered a biologically motivated adaptation of the perceptron, see \S\ref{s:relations}. The mechanism governing whether or not the selectron spikes is a Heaviside function acting on a weighted sum of synaptic inputs; our contribution is to propose a new reward function and corresponding learning rule. 

Let us establish some notation. Let $\X$ denote the set of $N$-dimensional $\{0,1\}$-valued vectors forming synaptic inputs to a selectron, and $Y=\{0,1\}$ the set of outputs. A selectron spikes according to 
\begin{equation}
	\label{e:threshold}
	y=f_{\wt}(\x) := H\left(\wt^\intercal \x-\vartheta\right),
	\text{ where }
	H(z):= \begin{cases}
		1 & \text{if }z>0\\
		0 & \text{else}
	\end{cases}
\end{equation}
is the Heaviside function and $\wt$ is a $[0,1]\subset \bR$ valued $N$-vector specifying the selectron's synaptic weights. Let $P(\x)$ denote the probability of input $\x$ arising.

To model the neuromodulatory system we introduce random variable $\nu:\X\rightarrow\{-1,0,+1\}$, where positive values correspond to desirable outcomes, negative to undesirable and zero to neutral. Let $P(\nu|\x)$ denote the probability of the release of neuromodulatory signal subsequent to input $\x$. %We assume that most inputs are assigned neutral outcomes. % which we assume to be i.i.d.
%By abuse of notation, we frequently write $\nu(\x)$ to refer to the neuromodulatory outcome subsequent to input $\x$.
% $H$ is constant except at 0, where it is not differentiable.

\begin{defn}
	Define \textbf{reward function}
\begin{equation}
	\label{e:reward}
	R(\x,f_{\wt},\nu) 
	= \underbrace{\nu(\x)}_{\text{neuromodulators}}\cdot \underbrace{(\wt^\intercal \x-\vartheta)}_{\text{margin}}\cdot \underbrace{f_{\wt}(\x)}_{\text{selectivity}} =
	\begin{cases}
		\nu(\x)\cdot (\wt^\intercal \x-\vartheta) & \text{if }y=1\\
		0 & \text{else}.
	\end{cases}
\end{equation}
\end{defn}
The reward consists in three components. The first term is the neuromodulatory signal, which acts as a supervisor. The second term is the total current $\wt^\intercal\x$ minus the threshold $\vartheta$. It is analogous to the margin in support vector machines or boosting algorithms, see section \S\ref{s:learning} for a precise formulation. 

The third term gates rewards according to whether or not the selectron spikes. The reward is thus \emph{\textbf{selected}}\footnote{The name ``selectron'' was chosen to emphasize this selective aspect.}: neuromodulatory signals are ignored by the selectron's reward function when it does not spike, enabling specialization.
%only the subset of inputs causing spikes $f^{-1}(\wt)\subset \X$ is visible to a selectron's reward function.
 
\paragraph{Constrained reward maximization.}
The selectron solves the following optimization problem:
\begin{align}
	\underset{\wt}{\text{maximize:}} \hspace{3mm} & 
	\widehat{R}_n:=\sum_{i=1}^n \nu(\x^{(i)})\cdot (\wt^\intercal \x^{(i)}-\vartheta)\cdot f_{\wt}(\x^{(i)})
	\label{e:goal}\\
	\mbox{subject to:} \hspace{3mm} &  \|\wt\|_1\leq\omega
	\,\,\text{ for some $\omega>0$.}
	\notag 
\end{align}	

\begin{rem}[spikes encode rewards]\eod
Optimization problem \eqref{e:goal} ensures that selectrons spike for inputs that, on the basis of their empirical sample, reliably lead to neuromodulatory rewards. %The margin provides a measure of confidence, since it is a weighted sum of estimates produced  by other selectrons. 
Thus, spikes encode expectations about rewards.
\end{rem}

The constraint is motivated by the discussion after Theorem~\ref{t:frequency} and the analysis in \S\ref{s:learning} and \S\ref{s:code}. We postpone discussion of how to impose the constraint to \S\ref{s:experiments}, and focus on reward maximization here.

The reward maximization problem cannot be solved analytically in general. However, it is possible to use an iterative approach. Although $f_\wt(\x)$ is not continuous, the reward function is a continuous function of $\wt$ and is differentiable everywhere except for the ``corner'' where $\wt^\intercal\x-\vartheta=0$. We therefore apply gradient ascent by computing the derivative of \eqref{e:goal} with respect to synaptic weights to obtain online learning rule
\begin{equation}
	\Delta \wt_{j} = \alpha\cdot \nu(\x)\cdot \x_j\cdot f_{\wt}(\x) 
	= \begin{cases}
	\alpha\cdot \nu(\x) & \mbox{if }\x_j=1 \text{ and } y=1
	\\
	0 &\mbox{else}
	\end{cases}
	\label{e:grad-asc}
\end{equation}
where update factor $\alpha$ controls the learning rate.

The learning rule is \emph{\textbf{selective}}: regardless of the neuromodulatory signal, synapse $\wt_{jk}$ is updated only if there is both an input $\x_j=1$ and output spike $y=f_{\wt}(\x)=1$.

The selectron is not guaranteed to find a global optimum. It is prone to initial condition dependent local optima because rewards depend on output spikes in learning rule \eqref{e:grad-asc}. Although this is an undesirable property for an isolated learner, it is less important, and perhaps even advantageous, in large populations where it encourages specialization. 

\begin{rem}[unsupervised setting]\label{eg:saturate}\eod
	Define the \emph{unsupervised setting} by $\nu(\x)=1$ for all $\x$. The reward function reduces to $R(\x, f_{\wt})=(\wt^\intercal \x-\vartheta)\cdot f_{\wt}(\x)$. Without the constraint synapses will saturate. Imposing the constraint yields a more interesting solution where the selectron finds a weight vector summing to $\omega$ which balances (i) frequent spikes and (ii) high margins.	
%	The \emph{exploratory setting} introduces an exploratory bias $\nu:\X\rightarrow\{-1,\epsilon,1\}$ for some $\epsilon>0$. 
\end{rem}

\begin{thm}[Controlling the frequency of spikes]\label{t:frequency}\eod
	Assuming synaptic inputs are i.i.d. Bernoulli variables with $P(\text{spike})=p$, then
	\begin{equation*}
		P\Big(f_\wt(\x)=1\Big) \leq p\cdot\left(\frac{\|\wt\|_1}{\vartheta}\right)^2
		\leq p\cdot\left(\frac{\omega}{\vartheta}\right)^2. 
	\end{equation*}
\end{thm}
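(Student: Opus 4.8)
The plan is to derive the bound from a second-moment estimate, applying Markov's inequality not to $\wt^\intercal\x$ itself but to its square.

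First I would rewrite the spiking event. By \eqref{e:threshold}, $\{f_\wt(\x)=1\}=\{\wt^\intercal\x-\vartheta>0\}=\{\wt^\intercal\x>\vartheta\}$. Since the weights and inputs are nonnegative we have $\wt^\intercal\x\ge 0$, and $\vartheta>0$, so this event coincides with $\{(\wt^\intercal\x)^2>\vartheta^2\}$. Markov's inequality applied to the nonnegative variable $(\wt^\intercal\x)^2$ then yields $P\big(f_\wt(\x)=1\big)\le \bE\big[(\wt^\intercal\x)^2\big]/\vartheta^2$, so the whole problem reduces to bounding the second moment by $p\,\|\wt\|_1^2$.

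The second step is the computation of $\bE[(\wt^\intercal\x)^2]$. Expanding $(\wt^\intercal\x)^2=\sum_{j,k}\wt_j\wt_k\,\x_j\x_k$ and using that the $\x_j$ are i.i.d.\ $\{0,1\}$-valued with mean $p$ — so $\bE[\x_j^2]=\bE[\x_j]=p$ and $\bE[\x_j\x_k]=p^2$ for $j\ne k$ — gives $\bE[(\wt^\intercal\x)^2]=p\,\|\wt\|_2^2+p^2\big(\|\wt\|_1^2-\|\wt\|_2^2\big)$. Because the $\wt_j$ are nonnegative, $\|\wt\|_2^2\le\|\wt\|_1^2$, and since $p\le 1$ the coefficient $p(1-p)$ of the $\|\wt\|_2^2$ term is nonnegative; hence $\bE[(\wt^\intercal\x)^2]\le p^2\|\wt\|_1^2+p(1-p)\|\wt\|_1^2=p\,\|\wt\|_1^2$. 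Substituting into the Markov bound gives the first inequality, and the second is immediate from the constraint $\|\wt\|_1\le\omega$.

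The only real subtlety — the ``main obstacle'' — is spotting that a direct Markov bound on $\wt^\intercal\x$ produces only $p\,\|\wt\|_1/\vartheta$ (the wrong power), so one must square first; the quadratic then naturally splits into a ``mean$^2$'' part $p^2\|\wt\|_1^2$ and a ``variance-like'' part $p(1-p)\|\wt\|_2^2$, and the identity $\x_j^2=\x_j$ for $\{0,1\}$-variables is exactly what keeps the latter small enough to be absorbed into the former. (Note also that the bound is only non-trivial when $\|\wt\|_1>\vartheta$; otherwise $\wt^\intercal\x\le\|\wt\|_1\le\vartheta$ and the selectron never spikes.) Everything else is routine algebra.
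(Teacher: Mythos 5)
Your proof is correct and follows essentially the same route as the paper's: the paper also computes $\bE[(\wt^\intercal\x)^2]=p(1-p)\|\wt\|_2^2+p^2\|\wt\|_1^2$ (Lemma~\ref{t:expvar}), applies the second-moment form of Chebyshev/Markov at threshold $\vartheta$, and concludes via $\|\wt\|_2\leq\|\wt\|_1$. Your additional remarks on why a first-moment Markov bound gives the wrong power are accurate but not needed.
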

The Bernoulli regime is the discrete-time analog of the homogeneous Poisson setting used to prove convergence of  reward-modulated STDP in \cite{legenstein:06}. Interestingly, in this setting the constraint provides a lever for controlling (lower bounding) rewards per spike
\begin{equation*}
	\Big\{\text{reward per spike}\Big\}
	= \frac{\widehat{R}}{P(f_\wt(\x)=1)}
	 \geq c_1\cdot \frac{\widehat{R}}{\omega^2}.
\end{equation*}
If inputs are not Bernoulli i.i.d., then $P(y=1)$ and $\omega$ still covary, although the precise relationship is more difficult to quantify. Although i.i.d. inputs are unrealistic, note that recent neurophysiological evidence suggests neuronal firing -- even of nearby neurons -- is uncorrelated \cite{Ecker:2010fk}.

%%%%%%%%%%%%%%%%%%%%%%%%%%%%%%%%%%%%%%%%%%%%%%%%%%%%%%%%%%%%%%%%%%%%
\section{Relation to leaky integrate-and-fire neurons equipped with STDP}
\label{s:relations}

The literature contains an enormous variety of neuronal models, which vary dramatically in sophistication and the extent to which they incorporate the the details of the underlying biochemical processes. Similarly, there is a large menagerie of models of synaptic plasticity \cite{dan:04}. We consider two well-established models: Gerstner's Spike Response Model (SRM) which generalizes leaky integrate-and-fire neurons \cite{gerstner:95} and the original spike-timing dependent plasticity  learning rule proposed by Song \emph{et al} \cite{song:00}, and show that the selectron arises in the fast time constant limit of the two models.

First let us recall the SRM. Suppose neuron $n^k$ last outputted a spike at time $t_k$ and receives input spikes at times $t_j$ from neuron $n^j$. Neuron $n^k$ spikes or according to the Heaviside function applied to the membrane potential $M_\wt$:
\begin{equation*}
	f_{\wt}(t) = H\left(M_{\wt}(t)-\vartheta\right)
	\text{ where }
	M_{\wt}(t) = \eta(t-t_k) + \sum_{t_j\leq t}\wt_{jk}\cdot\epsilon(t-t_j)
	\text{ at time $t\geq t_k$}.
\end{equation*}
Input and output spikes add 
\begin{gather*}
	\epsilon(t-t_j) = K\left[e^{\left(\frac{t_j-t}{\tau_m}\right)}
	-e^{\left(\frac{t_j-t}{\tau_s}\right)}\right]
	\text{ and }
	\eta(t-t_k) = \vartheta\left[K_1 e^{\left(\frac{t_k-t}{\tau_m}\right)}
	-K_2\left(e^{\left(\frac{t_k-t}{\tau_m}\right)}
	-e^{\left(\frac{t_k-t}{\tau_s}\right)}\right)\right]
\end{gather*}
to the membrane potential for $t_j\leq t$ and $t_k\leq t$ respectively. Here $\tau_m$ and $\tau_s$ are the membrane and synapse time constants. 

The original STDP update rule \cite{song:00} is 
\begin{equation}
	\label{e:stdp}
	\Delta \wt_{jk} =\begin{cases}
 	\alpha_+\cdot e^{\left(\frac{t_j-t_k}{\tau_+}\right)} & \mbox{ if }t_j\leq t_k
	\\  
	-\alpha_-\cdot e^{\left(\frac{t_k-t_j}{\tau_-}\right)} & \mbox{ else}
	\end{cases}
\end{equation}
where $\tau_+$ and $\tau_-$ are time constants. STDP potentiates input synapses that spike prior to output spikes and depotentiates input synapses that spike subsequent to output spikes.

\begin{thm}[the selectron is the fast time constant limit of SRM + STDP]\label{t:limit}\eod
	In the fast time constant limit, $\lim_{\tau_\bullet}\rightarrow 0$, the SRM transforms into a selectron with
	\begin{equation*}
		f_\wt(t) = H\Big(M_\wt(t)-\vartheta\Big)
		\quad\text{ where }\quad M_\wt = \sum_{\{j|t_j\geq t_k\}}\wt_{jk}\cdot \delta_{t_k}(t).
	\end{equation*}
	Moreover, STDP transforms into learning rule \eqref{e:grad-asc} in the unsupervised setting with $\nu(\x)=1$ for all $\x$. Finally, STDP arises as gradient ascent on a reward function whose limit is the unsupervised setting of reward function $\eqref{e:reward}$.
\end{thm}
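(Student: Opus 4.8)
The statement has three parts and I would establish them in the stated order. The one conceptual move underlying all of them is to bridge the continuous-time SRM and the discrete-time selectron by binning time into windows of width $\Delta t$ and identifying the binary vector $\x$ with the indicator of which afferents $j$ fire in the current window; the content of the ``fast time constant limit'' is that as $\tau_m,\tau_s,\tau_\pm\to 0$ the kernels $\epsilon,\eta$ and the STDP window all collapse to instantaneous (Dirac) effects, so that inputs within a window become effectively simultaneous and nothing persists across windows.

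For Part 1 I would first fix the EPSP amplitude by the natural normalization $K=(\tau_m-\tau_s)^{-1}$, so that $\int_0^\infty\epsilon(s)\,ds=1$ regardless of the time constants, and then show $\epsilon(t-t_j)\to\delta_{t_j}(t)$ weakly as $\tau_m,\tau_s\to 0$ (with $\tau_s/\tau_m$ bounded away from $0$ and $1$), while the refractory kernel $\eta(t-t_k)$ likewise concentrates at $t_k$ and acts as an instantaneous reset to $0$. It then follows that, in the window after the last output spike $t_k$, only afferents with $t_j\ge t_k$ contribute and they contribute their raw weight, giving $M_\wt(t)=\sum_{\{j\,:\,t_j\ge t_k\}}\wt_{jk}\,\delta_{t_k}(t)$; integrating over the window, $M_\wt=\wt^\intercal\x$, and the Heaviside rule becomes $f_\wt=H(\wt^\intercal\x-\vartheta)$, i.e. \eqref{e:threshold}.

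For Part 2 I would take $\tau_+,\tau_-\to 0$ in \eqref{e:stdp}: for any pair with $t_j\neq t_k$ the exponential vanishes, so every non-coincident pair -- in particular the whole depression branch $t_j>t_k$ -- drops out, while a coincident pair contributes $\alpha_+$. Since by Part 1 the output is an instantaneous function of the current input, ``$t_j=t_k$'' holds exactly when $\x_j=1$ and $y=f_\wt(\x)=1$ in the same window, so $\Delta\wt_{jk}\to\alpha_+\cdot\x_j\cdot f_\wt(\x)$, which is \eqref{e:grad-asc} with $\nu(\x)\equiv 1$ and $\alpha=\alpha_+$. For Part 3 I would take the SRM reward to be the total margin over output spikes, $R^{\mathrm{SRM}}_\wt=\sum_{t_k}\big(M_\wt(t_k)-\vartheta\big)$; differentiating with the output spike times held fixed (the same convention by which \eqref{e:grad-asc} is obtained from \eqref{e:goal}) gives $\partial R^{\mathrm{SRM}}_\wt/\partial\wt_{jk}=\sum_{t_j\le t_k}\epsilon(t_k-t_j)$, a causal decaying function of $t_k-t_j$ that is STDP's potentiation window (with $\epsilon$ in the role of $\alpha_+e^{(t_j-t_k)/\tau_+}$, the depression lobe coming from the same pairing seen from the next spike / the refractory term). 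By Part 1, $M_\wt(t_k)\to\wt^\intercal\x(t_k)$, so $R^{\mathrm{SRM}}_\wt\to\sum_{t_k}(\wt^\intercal\x-\vartheta)$, which is $\widehat R$ of \eqref{e:goal} in the unsupervised setting $\nu\equiv 1$ (the sum over spiking inputs), as claimed.

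The hard part is not any single computation but making ``$\lim_{\tau_\bullet\to0}$'' precise: one must say how the discretization scale $\Delta t$ is coupled to the time constants so that the collapsed EPSPs and the refractory reset genuinely separate consecutive windows and no spike is counted twice, and, in Part 3, pin down the exact correspondence between the difference-of-exponentials kernel $\epsilon$ and the single-exponential STDP window (plus its depression lobe) so that gradient ascent on $R^{\mathrm{SRM}}_\wt$ reproduces \eqref{e:stdp} itself rather than merely a rule of the same qualitative shape. Once the limit is set up, Parts 1 and 2 reduce to elementary limits of explicit exponentials.
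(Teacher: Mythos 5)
Your Parts 1 and 2 are essentially the paper's argument (collapse the kernels to instantaneous contributions; the paper implements Part 1 by simply setting $K_1=K_2=0$, $\tau_s=\tfrac12\tau_m$ and normalizing $K$, rather than treating $\eta$ as a reset). One substantive difference in Part 2: the paper does not take the pointwise limit of \eqref{e:stdp} but the distributional one, rescaling $\alpha_\pm/\tau_\pm$ so the area under each lobe is preserved, and an integration by parts against a test function then shows that \emph{both} lobes concentrate at $\Delta t=0$, giving $\Delta\wt_{jk}=(\alpha_+-\alpha_-)\cdot\delta_0(\Delta t)$ (Lemma~\ref{t:fasttime}). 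Your claim that ``the whole depression branch drops out'' is an artifact of taking the pointwise limit without area normalization; under that convention the total plasticity induced by a continuous-time spike train also vanishes, so the limit is only meaningful after binning. Either way the conclusion matches \eqref{e:grad-asc} with $\nu\equiv1$ (with $\alpha=\alpha_+-\alpha_-$ rather than $\alpha_+$), so this is a difference of route, not a gap.

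Part 3, however, has a genuine gap, and it is exactly the one you flag yourself. Gradient ascent on your candidate reward $R^{\mathrm{SRM}}_\wt=\sum_{t_k}(M_\wt(t_k)-\vartheta)$ yields $\sum_{t_j\le t_k}\epsilon(t_k-t_j)$, which is a \emph{causal, purely nonnegative} kernel with time constants $\tau_m,\tau_s$; it cannot reproduce \eqref{e:stdp}, whose potentiation lobe is a single exponential with an unrelated time constant $\tau_+$ and whose depression lobe is anti-causal and negative. No choice of correspondence between $\epsilon$ and the STDP window closes this, because the depression lobe acts on input spikes \emph{after} the output spike, which contribute nothing to $M_\wt(t_k)$. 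The paper does not attempt this derivation: it simply exhibits the reward \eqref{e:stdp-opt}, which is linear in $\wt$ with coefficients \emph{equal by construction} to the STDP kernel (potentiation term over $t_{k-1}<t_j\le t_k$, depression term over $t_k<t_j<t_{k+1}$), so that its $\wt$-gradient at fixed spike times is literally \eqref{e:stdp}; the claim of Theorem~\ref{t:limit} is then only that the fast-time-constant limit of \emph{that} reward is the unsupervised case of \eqref{e:reward}, which follows from the same delta-concentration as in Part 2. To repair your proof you should replace the membrane-potential-based reward with one whose coefficients are the STDP exponentials themselves.
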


Theorem~\ref{t:limit} shows that STDP implicitly maximizes a time-discounted analog of the reward function in \eqref{e:goal}. We expect many models of reward-modulated synaptic plasticity to be analytically tractable in the fast time constant limit.  An important property  shared by STDP and the selectron is that synaptic (de)potentiation is gated by output spikes, see \S%
\ref{s:perceptron} for a comparison with the perceptron which \emph{does not} gate synaptic learning

%%%%%%%%%%%%%%%%%%%%%%%%%%%%%%%%%%%%%%%%%%%%%%%%%%%%%%%%%%%%%%%%%%%%
\section{An error bound}
\label{s:learning}

Maximizing reward function \eqref{e:goal} implies that selectrons encode reward estimates into their spikes. Indeed, it recursively justifies incorporating spikes into the reward function via the margin $(\wt^\intercal \x-\vartheta)$, which only makes sense if upstream spikes predict reward. However, in a large system where estimates pile on top of each other there is a tendency to \emph{overfit}, leading to poor generalizations \cite{geman:92}. It is therefore crucial to provide \emph{guarantees} on the quality of spikes as estimators. 

Boosting algorithms, where the outputs of many weak learners are aggregated into a classifier \cite{freund:96}, are remarkably resistant to overfitting as the number of learners increases \cite{schapire:98}. Cortical learning may be analogous to boosting: individual neurons have access to a tiny fraction of the total brain state, and so are weak learners; and in the fast time constant limit, neurons are essentially aggregators. 

We sharpen the analogy using the selectron. As a first step towards understanding how the cortex combats overfitting, we adapt a theorem developed to explain the effectiveness of boosting \cite{boucheron:05}. The goal is to show how the margin and constraint on synaptic weights improve error bounds.

\begin{defn}\label{d:z1}
	A selectron incurs a \textbf{$0/1$ loss} if a spike is followed by negative neuromodulatory feedback
\begin{equation}
	\label{e:z1loss}
	l(\x,f_\wt,\nu)=\bO_{-f_\wt(\x)\cdot \nu(\x)}=\begin{cases}
		1 & \text{if }y=1\text{ and }\nu(\x)=-1\\
		0 & \text{else.}
	\end{cases}
\end{equation}
The $0/1$ loss fails to take the estimates (spikes) of other selectrons into account and is difficult to optimize, so we also introduce the \textbf{hinge loss}:
\begin{equation}
	h^\kappa(\x,f_{\wt},\nu) 
	:= \Big(\kappa-(\wt^\intercal \x-\vartheta)\cdot \nu(\x)\Big)_+\cdot f_{\wt}(\x),
	\text{ where }
	(x)_+:=\begin{cases}
		x & \text{if }x\geq0\\
		0 & \text{else}.
	\end{cases}
\end{equation}
Note that $l\leq h^\kappa$ for all $\kappa\geq 1$. Parameter $\kappa$ controls the saturation point, beyond which the size of the margin makes no difference to $h^\kappa$.
\end{defn}
An alternate $0/1$ loss\footnote{See \S%
\ref{s:hmargin} for an error bound.} penalizes a selectron if it {\rm (i)} fires when it shouldn't, i.e. when $\nu(\x)=-1$ or {\rm (ii)} does not fire when it should, i.e. when $\nu(\x)=1$.  However, since the cortex contains many neurons and spiking is metabolically expensive \cite{Hasenstaub:2010fk}, we propose a conservative loss that only penalizes errors of commission (``first, do no harm'') and does not penalize specialization.

\begin{thm}[spike error bound]\label{t:lbound}\eod
	Suppose each selectron has $\leq N$ synapses. For any selectron $n^k$, let $S^k=\{n^k\}\cup \{n^j:n^j\rightarrow n^k\}$ denote a 2-layer feedforward subnetwork. 
	For all $\kappa\geq 1$, with probability at least $1-\delta$,
	\begin{align*}
		\bE\underbrace{\big[l(\x,f_\wt,\nu)\big]}_{0/1\text{ loss}}
		\leq & \frac{1}{n}\sum_i \underbrace{h^\kappa\big(\x^{(i)},f_\wt,\nu(\x^{(i)})\big)}_{\text{hinge loss}}
		+ \omega\cdot\underbrace{2B\cdot
		\frac{\sqrt{8(N+1)\log(n+1)}+1}{\sqrt{n}}}_{\text{capacity term}}\\
		& + 2B\cdot\underbrace{\sqrt{\frac{2\log\frac{2}{\delta}}{n}}}_{\text{confidence term}}
		\hspace{15mm}\text{where }B=\kappa+\omega-\vartheta.
	\end{align*}
\end{thm}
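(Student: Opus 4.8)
The plan is to prove the bound for the hinge loss $h^\kappa$ by a Rademacher-complexity uniform-convergence argument, in the style of the margin analysis of boosting in \cite{boucheron:05}, and then push it down to the $0/1$ loss via the pointwise domination $l\le h^\kappa$ recorded in Definition~\ref{d:z1}. First I would pin down the scale. Since $\x\in\{0,1\}^N$, $\wt\in[0,1]^N$ and $\|\wt\|_1\le\omega$, whenever $f_\wt(\x)=1$ the margin obeys $0<\wt^\intercal\x-\vartheta\le\omega-\vartheta$; hence $h^\kappa=\big(\kappa-(\wt^\intercal\x-\vartheta)\nu(\x)\big)_+\cdot f_\wt(\x)$ lies in $[0,B]$ with $B=\kappa+\omega-\vartheta$, and $h^\kappa=0$ off the event $f_\wt(\x)=1$. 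So the loss class $\mathcal H^\kappa:=\{(\x,\nu)\mapsto h^\kappa(\x,f_\wt,\nu):\|\wt\|_1\le\omega\}$ is $[0,B]$-valued, and the textbook symmetrization/bounded-differences bound yields, with probability at least $1-\delta$,
\[
  \bE[h^\kappa]\;\le\;\tfrac1n{\textstyle\sum_{i=1}^n} h^\kappa\big(\x^{(i)},f_\wt,\nu(\x^{(i)})\big)\;+\;2\,\mathcal R_n(\mathcal H^\kappa)\;+\;2B\sqrt{\tfrac{2\log(2/\delta)}{n}},
\]
where $\mathcal R_n$ is the Rademacher complexity and the last (confidence) term is a convenient over-estimate of the McDiarmid deviation. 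Together with $\bE[l]\le\bE[h^\kappa]$, this is the theorem once $2\mathcal R_n(\mathcal H^\kappa)$ is shown to be at most the stated capacity term.

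Bounding $\mathcal R_n(\mathcal H^\kappa)$ is the crux, and the obstacle is that $h^\kappa$ is \emph{not} a Lipschitz function of the linear predictor $\wt^\intercal\x$: the gating factor $f_\wt(\x)=H(\wt^\intercal\x-\vartheta)$ forces a jump at the threshold, so the Lipschitz contraction lemma cannot be applied to $\mathcal H^\kappa$ as it stands. I would remove the discontinuity algebraically. Writing $m=\wt^\intercal\x-\vartheta$,
\[
  h^\kappa(\x,f_\wt,\nu)\;=\;\kappa\,f_\wt(\x)\;+\;q_\nu(m),\qquad q_\nu(m):=\big[(\kappa-m\nu)_+-\kappa\big]\,H(m),
\]
an identity, since the inserted $\pm\kappa H(m)$ cancel. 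A case check over $\nu\in\{-1,0,+1\}$ gives $q_{-1}(m)=m_+$, $q_0\equiv0$ and $q_1(m)=-\min(m_+,\kappa)$, so each $q_\nu$ is $1$-Lipschitz on $\bR$ and vanishes at the origin — the subtracted $\kappa H(m)$ is exactly what was needed to kill the jump. Subadditivity of Rademacher complexity then gives $\mathcal R_n(\mathcal H^\kappa)\le\kappa\,\mathcal R_n(\{f_\wt\})+\mathcal R_n\big(\{\x\mapsto q_{\nu(\x)}(\wt^\intercal\x-\vartheta)\}\big)$, and the contraction lemma (in the form allowing a distinct $1$-Lipschitz map, vanishing at $0$, per sample) bounds the second summand by $\mathcal R_n\big(\{\x\mapsto\wt^\intercal\x-\vartheta:\|\wt\|_1\le\omega\}\big)$; the fixed shift $-\vartheta$ drops out because the Rademacher signs are centered, leaving $\omega$ times the Rademacher complexity of the unit-$\ell_1$-ball linear class.

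Finally I would bound the two remaining complexities by a common VC-type estimate. The thresholds $\x\mapsto H(\wt^\intercal\x-\vartheta)$ sit inside the class of affine halfspaces on $\bR^N$, of VC dimension at most $N+1$; Sauer--Shelah bounds their growth function by $(n+1)^{N+1}$, and Massart's finite-class lemma converts this into $\mathcal R_n(\{f_\wt\})\le\sqrt{2(N+1)\log(n+1)/n}$. The unit-$\ell_1$-ball linear class has pseudo-/subgraph-dimension at most $N+1$ as well and, being $[-1,1]$-valued over $\{0,1\}^N$ inputs, yields a Rademacher complexity of the same order by the analogous covering argument. Assembling the pieces — the factor $2$ from symmetrization merging with Massart's $\sqrt2$ into the $\sqrt8$, the additive $1/\sqrt n$ and the $\omega\,(2B)$ scaling emerging (with some slack) from the $\ell_1$-budget $\omega$ and the loss bound $B$ — reproduces the capacity term, and combining with the first display and $\bE[l]\le\bE[h^\kappa]$ gives the stated inequality. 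The one genuinely delicate point is the Lipschitz decomposition that tames the gating discontinuity; the rest is bookkeeping with standard concentration and VC inequalities.
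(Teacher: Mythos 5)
Your proposal is correct in outline and reaches the same bound, but it resolves the central difficulty --- the discontinuity introduced by the gating factor $f_\wt(\x)$ --- by a genuinely different device than the paper. The paper (Theorem~A.4 in the supplement) keeps the gated loss $\phi\cdot\bO_{f>0}$ intact and performs a change of measure to $Q=P(\cdot\,|\,f=1)$, splitting the deviation $\cA(f)-\widehat{\cA}_n(f)$ into $P(f{=}1)\bigl[A_Q-\widehat{A}_{\widehat{Q}_n}\bigr]+\widehat{A}_{\widehat{Q}_n}\bigl[\bE_P f-\bE_{\widehat{P}_n}f\bigr]$ and bounding the two suprema separately by Rademacher complexities plus a union bound (whence the $2(\bE[\bO_f]L_\phi+B)$ prefactor and the $2B\sqrt{2\log(2/\delta)/n}$ confidence term). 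Your route instead removes the discontinuity pointwise via the identity $h^\kappa=\kappa f_\wt(\x)+q_\nu(m)$ with $q_{-1}(m)=m_+$, $q_0\equiv 0$, $q_1(m)=-\min(m_+,\kappa)$, each $1$-Lipschitz and vanishing at $0$, then applies subadditivity plus per-sample contraction. This is arguably cleaner: it avoids the conditional empirical process over the random index set $\{i:f(\x^{(i)})=1\}$, which is the delicate (and somewhat informally handled) point in the paper's argument, and it recovers the same union-bound structure and hence the same $2B\sqrt{2\log(2/\delta)/n}$ confidence term.

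Two small caveats. First, your capacity term comes out as $2\kappa\,{\mathcal R}_n(\{f_\wt\})+2\omega\cdot(\cdots)$, so the threshold-class contribution carries a prefactor $\kappa$ rather than being proportional to $\omega$ as in the stated bound $\omega\cdot 2B(\cdots)$; since $\vartheta\le\omega$ gives $B\ge\kappa$, your expression is dominated by the paper's whenever $\omega\ge\tfrac12$, but not in the degenerate regime of tiny $\omega$ --- worth a sentence of justification. Second, the theorem is phrased for the two-layer subnetwork $S^k$: the ``features'' $\x_j$ are themselves outputs of presynaptic selectrons $g^j\in\cF$ that vary over the class (this is why the paper's Lemma~A.6 bounds ${\mathcal Rad}_n(\cF_\omega)$ via $\omega\cdot{\mathcal Rad}_n(\cF)$ with $V_\cF\le N+1$). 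Your argument reads the bound over raw binary inputs; it extends verbatim with $g^j(\x)$ in place of $\x_j$ (the contraction and $\ell_1$-duality steps are unaffected and the same VC bound $N+1$ applies to the base class), but as written it proves a statement about a smaller function class than the one the theorem and the subsequent remark are about.
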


\begin{rem}[theoretical justification for maximizing margin and constraining $\|\wt\|_1$]\eod
	The theorem shows how subsets of distributed systems can avoid overfitting. First, it demonstrates the importance of maximizing the margin (i.e. the empirical reward). Second, it shows the capacity term depends on the number of synapses $N$ and the constraint $\omega$ on synaptic weights, rather than the capacity of $S^k$ -- which can be very large. 
\end{rem}

The hinge loss is difficult to optimize directly since gating with output spikes $f_\wt(\x)$ renders it discontinuous. However, in the Bernoulli regime, Theorem~\ref{t:frequency} implies the bound in Theorem~\ref{t:lbound} can be rewritten as
\begin{equation}
	\label{e:rew-error}
	\bE\big[l(\x,f_\wt,\nu)\big]
	\leq p\kappa\frac{\omega^2}{\vartheta^2} 
	- \widehat{R}_n\big(\x^{(i)},f_\wt,\nu(\x^{(i)})\big)
	+ \omega\cdot\big\{\text{capacity term}\big\}
	+ \big\{\text{confidence term}\big\}
\end{equation}
and so $\omega$ again provides the lever required to control the $0/1$ loss. The constraint $\|\wt\|_1\leq\omega$ is best imposed offline, see \S\ref{s:experiments}.

%%%%%%%%%%%%%%%%%%%%%%%%%%%%%%%%%%%%%%%%%%%%%%%%%%%%%%%%%%%%%%%%%%%%
\section{A bound on the efficacy of inter-neuronal communication}
\label{s:code}

 Even if a neuron's spikes perfectly predict positive neuromodulatory signals, the spikes only matter to the extent they affect other neurons in cortex. Spikes are produced for neurons by neurons. It is therefore crucial to provide guarantees on the usefulness of spikes.

In this section we quantify the effect of one selectron's spikes on another selectron's expected reward. We demonstrate a lower bound on efficacy and discuss its consequences.

\begin{defn}
	The \textbf{efficacy} of spikes from selectron $n^j$ on selectron $n^k$ is
	\begin{equation*}
		\frac{\delta R^k}{\delta \x_j} 
		:= \frac{\bE[R^k|\x_j=1] - \bE[R^k|\x_j=0]}{1-0},
	\end{equation*}
	i.e. the expected contribution of spikes from selectron $n^j$ to selectron $n^k$'s expected reward, relative to not spiking. The notation is intended to suggest an analogy with differentiation -- the infinitesimal difference made by spikes on a single synapse. 
\end{defn}

Efficacy is zero if $\bE[R^k|\x_j=1] = \bE[R^k|\x_j=0]$. In other words, if spikes from $n^j$ make no difference to the expected reward of $n^k$. 
%We expect that $\frac{\delta R^k}{\delta \x_j}\geq0$ since, to the extent that selectron $n^j$ maximizes reward function \eqref{e:goal}, its spikes predict neuromodulatory rewards with greater odds than chance. 

The following theorem relies on the assumption that the average contribution of neuromodulators is higher after $n^j$ spikes than after it does not spike (i.e. upstream spikes predict reward), see \S%
\ref{s:composition-proof} for precise statement. When the assumption is false the synapse $\wt_{jk}$ should be pruned.

\begin{thm}[spike efficacy bound]\label{t:composition}\eod
	Let $p_j:=\bE[Y^j]$ denote the frequency of spikes from neuron $n^j$. The efficacy of $n^j$'s spikes on $n^k$ is lower bounded by
	\begin{equation}
		\label{e:rcomp}
		c_2\cdot\underbrace{\frac{\delta R^k}{\delta \x_j}}_{\text{efficacy}}
		\geq \underbrace{\frac{\wt_j\cdot\bE[Y^jY^k]}{p_j}}_{\wt_j \text{-weighted co-spike frequency}}
		+ \underbrace{\frac{2\bE\Big[Y^jY^k\cdot \big((\wt^{\bcancel j})^\intercal \x-\vartheta\big)\Big]}{p_j(1-p_j)}}_{\text{co-spike frequency}}
		- \underbrace{\frac{\bE\Big[Y^k\cdot \big((\wt^{\bcancel j})^\intercal \x-\vartheta\big)\Big]}{1-p_j}}_{n^k\text{ spike frequency}}
	\end{equation}	
	where $c_2$ is described in \S%
\ref{s:composition-proof} and $\wt^{\bcancel j}_i := \wt_i$ if $i\neq j$ and $0$ if $i=j$.
\end{thm}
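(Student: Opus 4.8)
The plan is to expand the reward, isolate the contribution of the single synapse $\wt_j$, and convert the two conditional expectations defining efficacy into unconditional ones. Write $R^k=\nu(\x)\cdot(\wt^\intercal\x-\vartheta)\cdot Y^k$ with $Y^k=f_\wt(\x)$ and $Y^j=\x_j\in\{0,1\}$, and split the current as $\wt^\intercal\x=\wt_j\x_j+(\wt^{\bcancel j})^\intercal\x$. Putting $Z:=(\wt^{\bcancel j})^\intercal\x-\vartheta$ (which does not depend on $\x_j$), on $\{\x_j=1\}$ the current-minus-threshold equals $\wt_j+Z$, so $R^k\cdot\x_j=\wt_j\,\nu\,Y^jY^k+\nu\,Y^jY^kZ$, while on $\{\x_j=0\}$ it equals $Z$ and $R^k(1-\x_j)=\nu\,Y^kZ-\nu\,Y^jY^kZ$.

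Next I would use the identities $\bE[g(\x)\mid\x_j=1]=\bE[g(\x)\,\x_j]/p_j$ and $\bE[g(\x)\mid\x_j=0]=\bE[g(\x)(1-\x_j)]/(1-p_j)$, with $p_j=\bE[Y^j]$, and subtract. Since $\x_j\,H(\wt_j+Z)=Y^jY^k$ and $(1-\x_j)H(Z)=(1-Y^j)Y^k$, this yields the exact identity
\[
\frac{\delta R^k}{\delta\x_j}
= \frac{\wt_j\,\bE[\nu\,Y^jY^k]}{p_j}
+ \frac{\bE[\nu\,Y^jY^kZ]}{p_j(1-p_j)}
- \frac{\bE[\nu\,Y^kZ]}{1-p_j},
\]
where the $1/(p_j(1-p_j))$ weight on the co-spike term is $\tfrac1{p_j}+\tfrac1{1-p_j}$, one contribution from the $\x_j=1$ branch and one from the reweighted $\x_j=0$ branch. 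This already has exactly the shape of the claimed bound, with the same three denominators and with the $\nu$-free expectations replaced by their neuromodulator-weighted versions.

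It then remains to strip off $\nu$. Here I would invoke the predictivity hypothesis (that the conditional mean of $\nu$ given an $n^j$-spike dominates its conditional mean given no $n^j$-spike; stated precisely in the supplement) together with $\nu(\x)\leq 1$ and the observation that $Y^k=1$ forces $\wt^\intercal\x>\vartheta$, hence $Z>-\wt_j\geq-\omega$ on the co-spike event $\{Y^j=Y^k=1\}$ and $Z>0$ on $\{Y^j=0,\,Y^k=1\}$. Combined appropriately, these let me lower-bound each $\nu$-weighted expectation above by the corresponding $\nu$-free expectation, at the cost of a multiplicative constant $c_2$ (and, in the co-spike term, at the cost of the factor $2$ appearing in the statement).

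The main obstacle is precisely this last step. Because $Z$ is not sign-definite on the co-spike event -- it can be as negative as $-\wt_j$ -- passing from $\bE[\nu\,Y^jY^kZ]$ to a multiple of $\bE[Y^jY^kZ]$ is not a one-line domination: one must split according to the sign of $Z$, use $Y^k=1\Rightarrow Z+\wt_j>0$ on the co-spike event, and decide whether to control the three terms together or individually, all while keeping the chain of inequalities oriented as a genuine lower bound on efficacy. This is where the precise form of the predictivity assumption enters and where the constant $c_2$ is determined.
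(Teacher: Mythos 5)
Your algebraic half of the argument is exactly right and matches the paper's ``direct computation'': splitting $\wt^\intercal\x=\wt_j\x_j+(\wt^{\bcancel j})^\intercal\x$ and converting the two conditional expectations via $\bE[g\mid\x_j=a]$ does produce the three terms with denominators $p_j$, $p_j(1-p_j)$ and $1-p_j$ (and, as you note, the middle coefficient that actually comes out is $\tfrac1{p_j}+\tfrac1{1-p_j}=\tfrac1{p_j(1-p_j)}$; the factor $2$ in the stated bound is not produced by the paper's own computation either, so you need not manufacture it). The genuine gap is the step you yourself flag as ``the main obstacle'': you never actually remove $\nu$, and the route you sketch for doing so --- dominating each of the three $\nu$-weighted expectations separately, splitting on the sign of $Z$ --- is not how the constant $c_2$ arises and would indeed founder on exactly the sign-indefiniteness you describe. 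A proof proposal that ends by naming the hard step rather than executing it is incomplete.

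The paper's resolution is much blunter and happens \emph{before} any term-by-term decomposition. It defines $\nu_{jk=11}$ (resp.\ $\nu_{jk=01}$) as the \emph{ratio} of the $\nu$-weighted to the unweighted conditional expectation, i.e.\ by the identity
\begin{equation*}
	\bE\big[R^k\mid\x_j=1\big]=\nu_{jk=11}\cdot\bE\big[(\wt^\intercal\x-\vartheta)f_\wt(\x)\mid\x_j=1\big],
\end{equation*}
and similarly on $\{\x_j=0\}$. This factorization is definitional, so nothing needs to be proved about it; the predictivity assumption is precisely $\nu_{jk=11}\geq\nu_{jk=01}$, and since $\bE[(\wt^\intercal\x-\vartheta)f_\wt(\x)\mid\x_j=0]\geq 0$ (the margin is positive whenever $f_\wt=1$, so the \emph{whole} conditional expectation is nonnegative even though $Z$ alone is not sign-definite), one gets $\frac{\delta R^k}{\delta\x_j}\geq\nu_{jk=11}\big(\bE[(\wt^\intercal\x-\vartheta)f_\wt\mid\x_j=1]-\bE[(\wt^\intercal\x-\vartheta)f_\wt\mid\x_j=0]\big)$, whence $c_2=1/\nu_{jk=11}$. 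Only \emph{then} does one perform your decomposition on the now $\nu$-free right-hand side. If you reorder your argument this way --- strip $\nu$ from the two conditional expectations as wholes, then expand --- your proof closes with no sign-splitting needed.
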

The efficacy guarantee is interpreted as follows. First, the guarantee improves as co-spiking by $n^j$ and $n^k$ increases. However, the denominators imply that increasing the frequency of $n^j$'s spikes \emph{worsens} the guarantee, insofar as $n^j$ is not correlated with $n^k$. Similarly, from the third term, increasing $n^k$'s spikes \emph{worsens} the guarantee if they do not correlate with $n^j$.

An immediate corollary of Theorem~\ref{t:composition} is that Hebbian learning rules, such as STDP and the selectron learning rule \eqref{e:grad-asc}, improve the efficacy of spikes. However, it also shows that naively increasing the frequency of spikes carries a cost. Neurons therefore face a tradeoff. In fact, in the Bernoulli regime, Theorem~\ref{t:frequency} implies \eqref{e:rcomp} can be rewritten as
\begin{equation}
	\label{e:rcomp2}
	c_2\cdot\frac{\delta R^k}{\delta \x_j}
	\geq \frac{\wt_j}{p}\cdot \bE[Y^jY^k]
	+ \frac{2}{p(1-p)}\bE\Big[Y^jY^k\cdot \big((\wt^{\bcancel j})^\intercal \x-\vartheta\big)\Big]
	- \frac{p\cdot\omega^2\cdot(\omega-\vartheta)}{(1-p)\vartheta^2},
\end{equation}
so the constraint %the tradeoff can be expressed in terms of balancing $\|\wt\|_1$ against $\widehat{R}$, and 
$\omega$  on synaptic strength can be used as a lever to improve guarantees on efficacy.

\begin{rem}[efficacy improved by pruning weak synapses]\eod
	\label{r:prune}
	The $1^{st}$ term in \eqref{e:rcomp} suggests that pruning weak synapses increases the efficacy of spikes, and so may aid learning in populations of selectrons or neurons.
\end{rem}

%%%%%%%%%%%%%%%%%%%%%%%%%%%%%%%%%%%%%%%%%%%%%%%%%%%%%%%%%%%%%%%%%%%%
\section{Experiments}
\label{s:experiments}

Cortical neurons are constantly exposed to different input patterns as organisms engage in different activities. It is therefore important that what neurons learn is robust to changing inputs \cite{fusi:05, fusi:07}. In this section, as proof of principle, we investigate a simple tweak of classical STDP involving offline regularization. We show that it improves robustness when neurons are exposed to more than one pattern.

Observe that \emph{regularizing} optimization problem \eqref{e:goal} yields
\begin{align}
	\underset{\wt}{\text{maximize:}} \hspace{3mm} &
	\sum_{i=1}^n R\big(\x^{(i)},f_{\wt},\nu(\x^{(i)})\big) - \frac{\gamma}{2}(\|\wt\|_1-\omega)^2\\
	\label{e:constrained-ascent}
	\text{learning rule:}\hspace{3mm} &
	\Delta \wt_{j} = \alpha\cdot \nu(\x)\cdot \x_j\cdot f_{\wt}(\x) -\gamma\cdot\big(\|\wt\|_1-\omega\big)\cdot\wt_j
\end{align}
incorporates synaptic renormalization directly into the update. However, \eqref{e:constrained-ascent} requires continuously re-evaluating the sum of synaptic weights. We therefore \emph{decouple} learning into an online reward maximization phase and an offline regularization phase which resets the synaptic weights.

A similar decoupling may occur in cortex. It has recently been proposed that a function of NREM sleep may be to regulate synaptic weights \cite{Tononi:06}. Indeed, neurophysiological evidence suggests that average cortical firing rates increase during wakefulness and decrease during sleep, possibly reflecting synaptic strengths \cite{vyazovskiy:08, vyazovskiy:09}. Experimental evidence also points to a net increase in dendritic spines (synapses) during waking and a net decrease during sleep \cite{Maret:2011rt}. 

\paragraph{Setup.}
We trained a neuron on a random input pattern for 10s to 87\% accuracy with regularized STDP. See \S%
\ref{s:poisson} for details on the structure of inputs. We then performed 700 trials (350 classical and 350 regularized) exposing the neuron to a new pattern for 20 seconds and observed performance under classical and regularized STDP. %We investigated the robustness of classical and regularized STDP after the training pattern is removed and replaced by a second, randomly generated pattern. 

%Based on the analysis in sections \S\ref{s:learning} and \S\ref{s:code}, we tweak the classical STDP learning rule introduced in \cite{song:00}. 

\paragraph{SRM neurons with classical STDP.}
We used Gerstner's SRM model, recall \S\ref{s:relations}, with parameters chosen to exactly coincide with \cite{Masquelier:2007vn}: $\tau_m=10$, $\tau_s=2.5$, $K=2.2$, $K_1=2$, $K_2=4$ and $\vartheta=\frac{1}{4}\#\text{synapses}$.
STDP was implemented via \eqref{e:stdp} with parameters $\alpha_+=0.03125$, $\tau_+=16.8$, $\alpha_-=0.85\alpha_+$ and $\tau_-=33.7$ also taken from \cite{Masquelier:2007vn}. Synaptic weights were clipped to fall in $[0,1]$.

\paragraph{Regularized STDP}
consists of a small tweak of classical STDP in the online phase, and an additional offline regularization phase:
\begin{itemize}
	\item \emph{Online.} In the online phase, reduce the depotentiation bias from $0.85\alpha_+$ in the classical implementation to $\alpha_-=0.75\alpha_+$.
	\item \emph{Offline.}
	In the offline phase, modify synapses once per second according to 
	\begin{equation}
		\Delta \wt_{j} = \begin{cases}
			\gamma\cdot \left(\frac{3}{2}-\wt_{j}\right)\cdot (\omega - s) & \mbox{ if } \omega< s \\ 
			\gamma\cdot (\omega - s)& \mbox{ else,}
		\end{cases}
		\label{e:sstdp}
	\end{equation}
	where $s$ is output spikes per second, $\omega=5Hz$ is the target rate and update factor $\gamma=0.6$. The offline update rule is firing rate, and not spike, dependent.
\end{itemize}

Classical STDP has a depotentiation bias to prevent runaway potentiation feedback loops leading to seizures \cite{song:00}. Since synapses are frequently renormalized offline we incorporate a weak exploratory (potentiation) bias during the online phase which helps avoid local minima.\footnote{The input stream contains a repeated pattern, so there is a potentiation bias in practice even though the net integral of STDP in the online phase is negative.} This is in line with experimental evidence showing increased cortical activity during waking \cite{vyazovskiy:09}.

Since computing the sum of synaptic weights is non-physiological, we draw on Theorem~\ref{t:frequency} and use the neuron's firing rate when responding to uncorrelated inputs as a proxy for $\|\wt\|_1$. Thus, in the offline phase, synapses receive inputs generated as in the online phase but without repeated patterns. Note that \eqref{e:constrained-ascent} has a larger pruning effect on stronger synapses, discouraging specialization. Motivated by Remark~\ref{r:prune}, we introduce bias $(\frac{3}{2}-\wt_j)$ in the offline phase to ensure weaker synapses are downscaled more than strong synapses. For example, a synapse with $\wt_{i}=0.5$ is downscaled by \emph{twice} as much as a synapse with weight $\wt_{j}=1.0$.

Regularized STDP alternates between 2 seconds online and 4 seconds offline, which suffices to renormalize synaptic strengths. The frequency of the offline phase could be reduced by decreasing the update factors $\alpha_\pm$, presenting stimuli less frequently (than 7 times per second), or adding inhibitory neurons to the system. %For the purpose of our simulations, fast learning and frequent updates are convenient. 

\paragraph{Results.}
A summary of results is presented in the table below: accuracy quantifies the fraction of spikes that co-occur with each pattern. Regularized STDP outperforms classical STDP on both patterns on average. It should be noted that regularized neurons were not only online for $20$ seconds but also offline -- and exposed to Poisson noise -- for $40$ seconds. Interestingly, exposure to Poisson noise improves performance.

\begin{center}\begin{small}
    \begin{tabular*}{.35\textwidth}{c c c}
	\toprule
		\textbf{Algorithm} &  \multicolumn{2}{c}{\textbf{Accuracy}}\\ & Pattern 1 & Pattern 2 \\		
	\midrule
	Classical &  54\% & 39\% \\
	\midrule
	Regularized  &  59\% & 48\% \\
	\bottomrule
    \end{tabular*}
\end{small}\end{center}

Fig.~\ref{f:unlearn} provides a more detailed analysis.  Each panel shows a 2D-histogram (darker shades of gray correspond to more trials) plotting accuracies on both patterns simultaneously, and two 1D histograms plotting accuracies on the two patterns separately. The 1D histogram for regularized STDP shows a unimodal distribution for pattern \#2, with most of the mass over accuracies of 50-90\%. For pattern \#1, which has been ``unlearned'' for twice as long as the training period, most of the mass is over accuracies of 50\% to 90\%, with a significant fraction ``unlearnt''. By contrast, classical STDP exhibits extremely brittle behavior. It completely unlearns the original pattern in about half the trials, and also fails to learn the new pattern in most of the trials.

Thus,  as suggested by our analysis, introducing a regularization both improves the robustness of STDP and enables an exploratory bias by preventing runaway feedback leading to epileptic seizures.

\begin{figure}[t]
	\centering
	\subfigure[Classical STDP]
	{\includegraphics[width=0.49\textwidth]{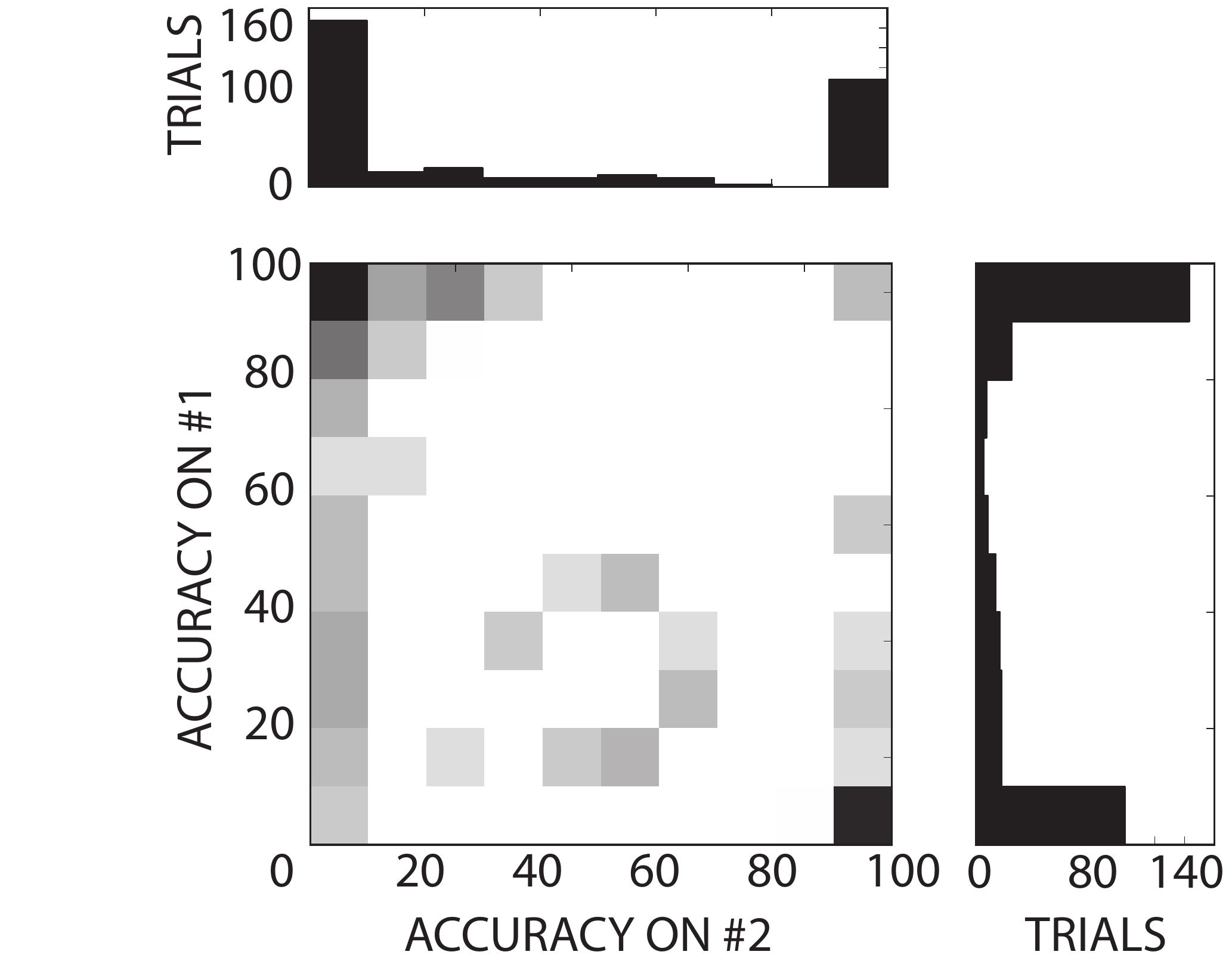} 
	\label{f:hit_c}}
	\subfigure[Regularized STDP]
	{\includegraphics[width=0.49\textwidth]{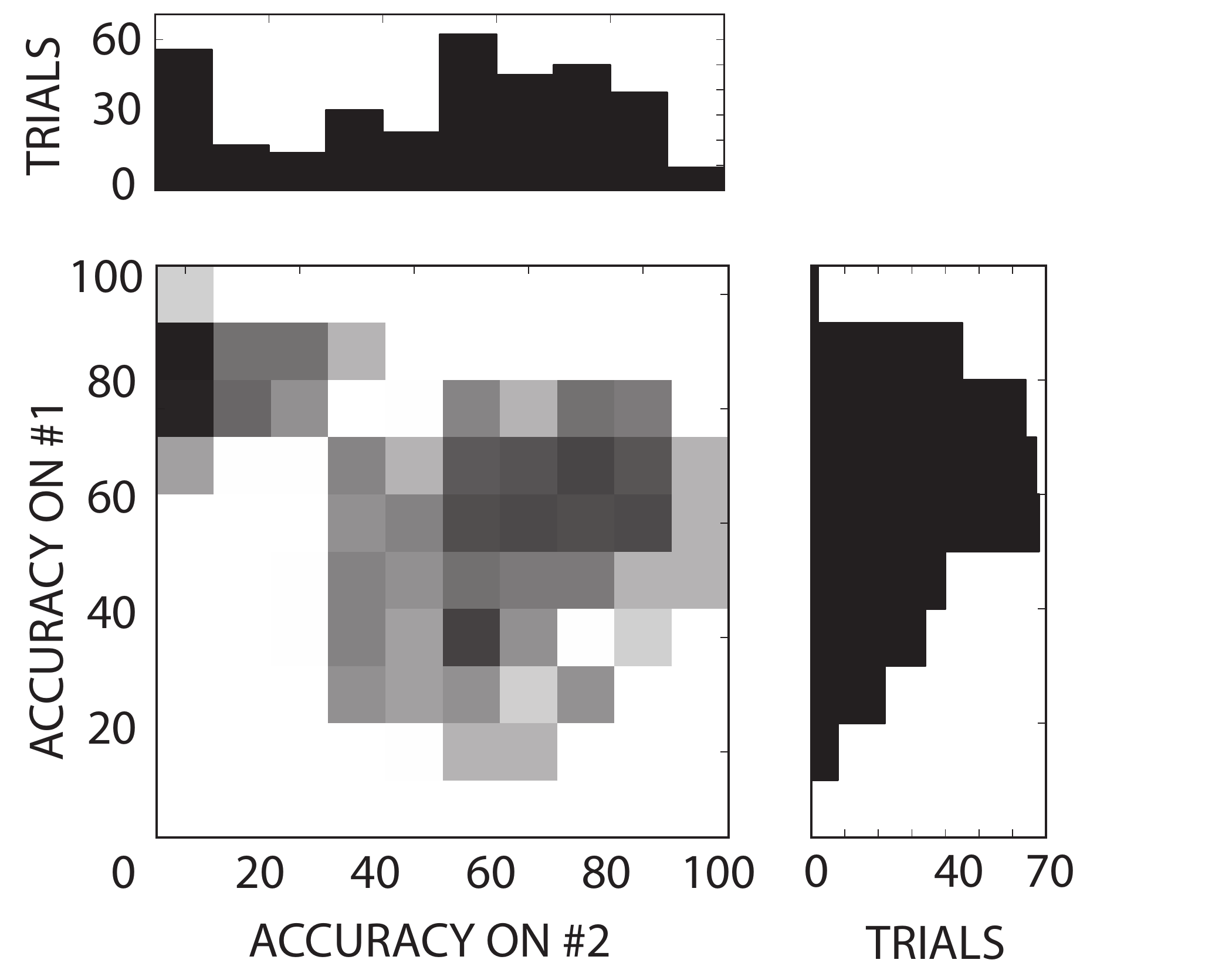} 
	\label{f:hit_w}}
		\vspace{-3mm}
	\caption{Accuracy after 20 seconds of exposure to a novel pattern.
	}
	\label{f:unlearn}
\end{figure}

%%%%%%%%%%%%%%%%%%%%%%%%%%%%%%%%%%%%%%%%%%%%%%%%%%%%%%%%%%%%%%%%%%%%
\section{Discussion}
\label{s:discussion}

The selectron provides a bridge between a particular model of spiking neurons -- the Spike Response Model \cite{gerstner:95} with the original spike-timing dependent plasticity rule \cite{song:00} -- and models that are amenable to learning-theoretic analysis.  Our hope is that the selectron and related models lead to an improved understanding of the principles underlying learning in cortex. It remains to be seen whether other STDP-based models also have tractable discrete-time analogs. 

The selectron is an interesting model in its own right: it embeds reward estimates into spikes and maximizes a margin that improves error bounds. It imposes a constraint on synaptic weights that: concentrates rewards/spike, tightens error bounds and improves guarantees on spiking efficacy. Although the analysis does not apply directly to continuous-time models, experiments show that a tweak inspired by our analysis improves the performance of a more realistic model. An important avenue for future research is investigating the role of feedback in cortex, specifically NMDA synapses, which may have interesting learning-theoretic implications.

\textbf{Acknowledgements.}
We thank Timoth{\'e}e Masquelier for generously sharing his source code \cite{Masquelier:2007vn} and Samory Kpotufe for useful discussions.

%%%%%%%%%%%%%%%%%%%%%%%%%%%%%%%%%%%%%%%%%%%%%%%%%%%%%%%%%%%%%%%%%%%%
{\small

\newcommand{\BMCxmlcomment}[1]{}

\BMCxmlcomment{

<refgrp>

<bibl id="B1">
  <title><p>A free energy principle for the brain</p></title>
  <aug>
    <au><snm>Friston</snm><fnm>KJ</fnm></au>
    <au><snm>Kilner</snm><fnm>J</fnm></au>
    <au><snm>Harrison</snm><fnm>L</fnm></au>
  </aug>
  <source>J. Phys. Paris</source>
  <pubdate>2006</pubdate>
  <volume>100</volume>
  <fpage>70</fpage>
  <lpage>-87</lpage>
</bibl>

<bibl id="B2">
  <title><p>The perceptron: a probabilistic model for information storage and
  organization in the brain</p></title>
  <aug>
    <au><snm>Rosenblatt</snm><fnm>F</fnm></au>
  </aug>
  <source>Psychol Rev</source>
  <pubdate>1958</pubdate>
  <volume>65</volume>
  <issue>6</issue>
  <fpage>386</fpage>
  <lpage>408</lpage>
</bibl>

<bibl id="B3">
  <title><p>Learning representations by back-propagating errors</p></title>
  <aug>
    <au><snm>Rumelhart</snm><fnm>D E</fnm></au>
    <au><snm>Hinton</snm><fnm>G E</fnm></au>
    <au><snm>Williams</snm><fnm>R J</fnm></au>
  </aug>
  <source>Nature</source>
  <pubdate>1986</pubdate>
  <volume>323</volume>
  <fpage>533</fpage>
  <lpage>536</lpage>
</bibl>

<bibl id="B4">
  <title><p>A {F}ast {L}earning {A}lgorithm for {D}eep {B}elief
  {N}ets</p></title>
  <aug>
    <au><snm>Hinton</snm><fnm>GE</fnm></au>
    <au><snm>Osindero</snm><fnm>S</fnm></au>
    <au><snm>Teh</snm><fnm>Y W</fnm></au>
  </aug>
  <source>Neural Computation</source>
  <pubdate>2006</pubdate>
  <volume>18</volume>
  <fpage>1527</fpage>
  <lpage>1554</lpage>
</bibl>

<bibl id="B5">
  <title><p>Competitive {H}ebbian learning through spike-timing-dependent
  synaptic plasticity</p></title>
  <aug>
    <au><snm>Song</snm><fnm>S</fnm></au>
    <au><snm>Miller</snm><fnm>K D</fnm></au>
    <au><snm>Abbott</snm><fnm>L F</fnm></au>
  </aug>
  <source>Nature Neuroscience</source>
  <pubdate>2000</pubdate>
  <volume>3</volume>
  <issue>9</issue>
</bibl>

<bibl id="B6">
  <title><p>Learning in {S}piking {N}eural {N}etworks by {R}einforcement of
  {S}tochastic {S}ynaptic {T}ransmission</p></title>
  <aug>
    <au><snm>Seung</snm><fnm>H S</fnm></au>
  </aug>
  <source>Neuron</source>
  <pubdate>2003</pubdate>
  <volume>40</volume>
  <issue>1063-1073</issue>
</bibl>

<bibl id="B7">
  <title><p>Reducing spike train variability: {A} computational theory of
  spike-timing dependent plasticity</p></title>
  <aug>
    <au><snm>Bohte</snm><fnm>S M</fnm></au>
    <au><snm>Mozer</snm><fnm>M C</fnm></au>
  </aug>
  <source>Advances in Neural Information Processing Systems (NIPS)</source>
  <pubdate>2005</pubdate>
</bibl>

<bibl id="B8">
  <title><p>A criterion for the convergence of learning with spike timing
  dependent plasticity</p></title>
  <aug>
    <au><snm>Legenstein</snm><fnm>R</fnm></au>
    <au><snm>Maass</snm><fnm>W</fnm></au>
  </aug>
  <source>Advances in Neural Information Processing Systems (NIPS)</source>
  <pubdate>2006</pubdate>
</bibl>

<bibl id="B9">
  <title><p>Simplified rules and theoretical analysis for information
  bottleneck optimization and {P}{C}{A} with spiking neurons</p></title>
  <aug>
    <au><snm>Buesing</snm><fnm>L</fnm></au>
    <au><snm>Maass</snm><fnm>W</fnm></au>
  </aug>
  <source>Adv in Neural Information Processing Systems (NIPS)</source>
  <pubdate>2007</pubdate>
</bibl>

<bibl id="B10">
  <title><p>Theoretical analysis of learning with reward-modulated
  spike-timing-dependent plasticity</p></title>
  <aug>
    <au><snm>Legenstein</snm><fnm>R</fnm></au>
    <au><snm>Pecevski</snm><fnm>D</fnm></au>
    <au><snm>Maass</snm><fnm>W</fnm></au>
  </aug>
  <source>Advances in Neural Information Processing Systems (NIPS)</source>
  <pubdate>2008</pubdate>
</bibl>

<bibl id="B11">
  <title><p>The information bottleneck method</p></title>
  <aug>
    <au><snm>Tishby</snm><fnm>N</fnm></au>
    <au><snm>Pereira</snm><fnm>FC</fnm></au>
    <au><snm>Bialek</snm><fnm>W</fnm></au>
  </aug>
  <source>Proc. of the 37-th Annual Allerton Conference on Communication,
  Control and Computing</source>
  <editor>B Hajek and RS Sreenivas</editor>
  <pubdate>1999</pubdate>
</bibl>

<bibl id="B12">
  <title><p>What can neurons do for their brain? {C}ommunicate selectivity with
  spikes</p></title>
  <aug>
    <au><snm>Balduzzi</snm><fnm>D</fnm></au>
    <au><snm>Tononi</snm><fnm>G</fnm></au>
  </aug>
  <source>To appear in Theory in Biosciences</source>
  <pubdate>2012</pubdate>
</bibl>

<bibl id="B13">
  <title><p>Metabolic cost as an organizing principle for cooperative
  learning</p></title>
  <aug>
    <au><snm>Balduzzi</snm><fnm>D</fnm></au>
    <au><snm>Ortega</snm><fnm>PA</fnm></au>
    <au><snm>Besserve</snm><fnm>M</fnm></au>
  </aug>
  <source>Under review, \href{http://xxx.lanl.gov/abs/1202.4482}</source>
  <pubdate>2012</pubdate>
</bibl>

<bibl id="B14">
  <title><p>A neuromorphic architecture for object recognition and motion
  anticipation using burst-{S}{T}{D}{P}</p></title>
  <aug>
    <au><snm>Nere</snm><fnm>A</fnm></au>
    <au><snm>Olcese</snm><fnm>U</fnm></au>
    <au><snm>Balduzzi</snm><fnm>D</fnm></au>
    <au><snm>Tononi</snm><fnm>G</fnm></au>
  </aug>
  <source>PLoS One</source>
  <pubdate>2012</pubdate>
  <volume>7</volume>
  <issue>5</issue>
  <fpage>e36958</fpage>
</bibl>

<bibl id="B15">
  <title><p>Spike timing-dependent plasticity as dynamic filter</p></title>
  <aug>
    <au><snm>Schmiedt</snm><fnm>J</fnm></au>
    <au><snm>Albers</snm><fnm>C</fnm></au>
    <au><snm>Pawelzik</snm><fnm>K</fnm></au>
  </aug>
  <source>Advances in Neural Information Processing Systems (NIPS)</source>
  <pubdate>2010</pubdate>
</bibl>

<bibl id="B16">
  <title><p>Neural {N}etwork {L}earning: {T}heoretical
  {F}oundations</p></title>
  <aug>
    <au><snm>Anthony</snm><fnm>M</fnm></au>
    <au><snm>Bartlett</snm><fnm>PL</fnm></au>
  </aug>
  <publisher>Cambridge Univ Press</publisher>
  <pubdate>1999</pubdate>
</bibl>

<bibl id="B17">
  <title><p>Large {M}argin {C}lassification {U}sing the {P}erceptron
  {A}lgorithm</p></title>
  <aug>
    <au><snm>Freund</snm><fnm>Y</fnm></au>
    <au><snm>Schapire</snm><fnm>R E</fnm></au>
  </aug>
  <source>Machine Learning</source>
  <pubdate>1999</pubdate>
  <volume>37</volume>
  <issue>3</issue>
  <fpage>277</fpage>
  <lpage>296</lpage>
</bibl>

<bibl id="B18">
  <title><p>Decorrelated neuronal firing in cortical microcircuits</p></title>
  <aug>
    <au><snm>Ecker</snm><fnm>AS</fnm></au>
    <au><snm>Berens</snm><fnm>P</fnm></au>
    <au><snm>Keliris</snm><fnm>GA</fnm></au>
    <au><snm>Bethge</snm><fnm>M</fnm></au>
    <au><snm>Logothetis</snm><fnm>NK</fnm></au>
    <au><snm>Tolias</snm><fnm>AS</fnm></au>
  </aug>
  <source>Science</source>
  <pubdate>2010</pubdate>
  <volume>327</volume>
  <issue>5965</issue>
  <fpage>584</fpage>
  <lpage>7</lpage>
</bibl>

<bibl id="B19">
  <title><p>Spike timing-dependent plasticity of neural circuits.</p></title>
  <aug>
    <au><snm>Dan</snm><fnm>Y</fnm></au>
    <au><snm>Poo</snm><fnm>MM</fnm></au>
  </aug>
  <source>Neuron</source>
  <publisher>Division of Neurobiology, Department of Molecular and Cell Biology
  and Helen Wills Neuroscience Institute, University of California, Berkeley,
  Berkeley, CA 94720 USA. ydan@berkeley.edu</publisher>
  <pubdate>2004</pubdate>
  <volume>44</volume>
  <issue>1</issue>
  <fpage>23</fpage>
  <lpage>-30</lpage>
</bibl>

<bibl id="B20">
  <title><p>Time structure of the activity in neural network models</p></title>
  <aug>
    <au><snm>Gerstner</snm><fnm>W</fnm></au>
  </aug>
  <source>Phys. Rev. E</source>
  <pubdate>1995</pubdate>
  <volume>51</volume>
  <issue>1</issue>
  <fpage>738</fpage>
  <lpage>-758</lpage>
</bibl>

<bibl id="B21">
  <title><p>Neural {N}etworks and the {B}ias/{V}ariance {D}ilemma</p></title>
  <aug>
    <au><snm>Geman</snm><fnm>S</fnm></au>
    <au><snm>Bienenstock</snm><fnm>E</fnm></au>
    <au><snm>Doursat</snm><fnm>R</fnm></au>
  </aug>
  <source>Neural Comp</source>
  <pubdate>1992</pubdate>
  <volume>4</volume>
  <fpage>1</fpage>
  <lpage>-58</lpage>
</bibl>

<bibl id="B22">
  <title><p>Experiments with a {N}ew {B}oosting {A}lgorithm</p></title>
  <aug>
    <au><snm>Freund</snm><fnm>Y</fnm></au>
    <au><snm>Schapire</snm><fnm>R E</fnm></au>
  </aug>
  <source>Machine Learning: Proceedings of the Thirteenth International
  Conference</source>
  <pubdate>1996</pubdate>
</bibl>

<bibl id="B23">
  <title><p>Boosting the {M}argin: {A} {N}ew {E}xplanation for the
  {E}ffectiveness of {V}oting {M}ethods</p></title>
  <aug>
    <au><snm>Schapire</snm><fnm>R E</fnm></au>
    <au><snm>Freund</snm><fnm>Y</fnm></au>
    <au><snm>Bartlett</snm><fnm>P</fnm></au>
    <au><snm>Lee</snm><fnm>W S</fnm></au>
  </aug>
  <source>The Annals of Statistics</source>
  <pubdate>1998</pubdate>
  <volume>26</volume>
  <issue>5</issue>
</bibl>

<bibl id="B24">
  <title><p>Theory of classification: {A} survey of some recent
  advances</p></title>
  <aug>
    <au><snm>Boucheron</snm><fnm>S</fnm></au>
    <au><snm>Bousquet</snm><fnm>O</fnm></au>
    <au><snm>Lugosi</snm><fnm>G</fnm></au>
  </aug>
  <source>ESAIM: PS</source>
  <pubdate>2005</pubdate>
  <volume>9</volume>
  <fpage>323</fpage>
  <lpage>375</lpage>
</bibl>

<bibl id="B25">
  <title><p>Metabolic cost as a unifying principle governing neuronal
  biophysics</p></title>
  <aug>
    <au><snm>Hasenstaub</snm><fnm>A</fnm></au>
    <au><snm>Otte</snm><fnm>S</fnm></au>
    <au><snm>Callaway</snm><fnm>E</fnm></au>
    <au><snm>Sejnowski</snm><fnm>TJ</fnm></au>
  </aug>
  <source>Proc Natl Acad Sci U S A</source>
  <pubdate>2010</pubdate>
  <volume>107</volume>
  <issue>27</issue>
  <fpage>12329</fpage>
  <lpage>34</lpage>
</bibl>

<bibl id="B26">
  <title><p>Cascade {M}odels of {S}ynaptically {S}tored {M}emories</p></title>
  <aug>
    <au><snm>Fusi</snm><fnm>S</fnm></au>
    <au><snm>Drew</snm><fnm>PJ</fnm></au>
    <au><snm>Abbott</snm><fnm>LF</fnm></au>
  </aug>
  <source>Neuron</source>
  <pubdate>2005</pubdate>
  <volume>45</volume>
  <fpage>599</fpage>
  <lpage>611</lpage>
</bibl>

<bibl id="B27">
  <title><p>Limits on the memory storage capacity of bounded
  synapses</p></title>
  <aug>
    <au><snm>Fusi</snm><fnm>S</fnm></au>
    <au><snm>Abbott</snm><fnm>LF</fnm></au>
  </aug>
  <source>Nature Neuroscience</source>
  <pubdate>2007</pubdate>
  <volume>10</volume>
  <issue>4</issue>
  <fpage>485</fpage>
  <lpage>-493</lpage>
</bibl>

<bibl id="B28">
  <title><p>Sleep function and synaptic homeostasis.</p></title>
  <aug>
    <au><snm>Tononi</snm><fnm>G</fnm></au>
    <au><snm>Cirelli</snm><fnm>C</fnm></au>
  </aug>
  <source>Sleep Med Rev</source>
  <publisher>Department of Psychiatry, University of Wisconsin, 6001 Research
  Park Blvd., Madison, WI 53719, USA. gtononi@wisc.edu</publisher>
  <pubdate>2006</pubdate>
  <volume>10</volume>
  <issue>1</issue>
  <fpage>49</fpage>
  <lpage>-62</lpage>
</bibl>

<bibl id="B29">
  <title><p>Molecular and electrophysiological evidence for net synaptic
  potentiation in wake and depression in sleep</p></title>
  <aug>
    <au><snm>Vyazovskiy</snm><fnm>V V</fnm></au>
    <au><snm>Cirelli</snm><fnm>C</fnm></au>
    <au><snm>Pfister Genskow</snm><fnm>M</fnm></au>
    <au><snm>Faraguna</snm><fnm>U</fnm></au>
    <au><snm>Tononi</snm><fnm>G</fnm></au>
  </aug>
  <source>Nat Neurosci</source>
  <pubdate>2008</pubdate>
  <volume>11</volume>
  <issue>2</issue>
  <fpage>200</fpage>
  <lpage>8</lpage>
</bibl>

<bibl id="B30">
  <title><p>Cortical firing and sleep homeostasis</p></title>
  <aug>
    <au><snm>Vyazovskiy</snm><fnm>V V</fnm></au>
    <au><snm>Olcese</snm><fnm>U</fnm></au>
    <au><snm>Lazimy</snm><fnm>YM</fnm></au>
    <au><snm>Faraguna</snm><fnm>U</fnm></au>
    <au><snm>Esser</snm><fnm>S K</fnm></au>
    <au><snm>Williams</snm><fnm>J C</fnm></au>
    <au><snm>Cirelli</snm><fnm>C</fnm></au>
    <au><snm>Tononi</snm><fnm>G</fnm></au>
  </aug>
  <source>Neuron</source>
  <pubdate>2009</pubdate>
  <volume>63</volume>
  <issue>6</issue>
  <fpage>865</fpage>
  <lpage>78</lpage>
</bibl>

<bibl id="B31">
  <title><p>Sleep and waking modulate spine turnover in the adolescent mouse
  cortex.</p></title>
  <aug>
    <au><snm>Maret</snm><fnm>S</fnm></au>
    <au><snm>Faraguna</snm><fnm>U</fnm></au>
    <au><snm>Nelson</snm><fnm>AB</fnm></au>
    <au><snm>Cirelli</snm><fnm>C</fnm></au>
    <au><snm>Tononi</snm><fnm>G</fnm></au>
  </aug>
  <source>Nat Neurosci</source>
  <publisher>Department of Psychiatry, University of Wisconsin, Madison,
  Wisconsin, USA.</publisher>
  <pubdate>2011</pubdate>
  <volume>14</volume>
  <issue>11</issue>
  <fpage>1418</fpage>
  <lpage>-1420</lpage>
</bibl>

<bibl id="B32">
  <title><p>Unsupervised learning of visual features through spike timing
  dependent plasticity</p></title>
  <aug>
    <au><snm>Masquelier</snm><fnm>T</fnm></au>
    <au><snm>Thorpe</snm><fnm>SJ</fnm></au>
  </aug>
  <source>PLoS Comput Biol</source>
  <pubdate>2007</pubdate>
  <volume>3</volume>
  <issue>2</issue>
  <fpage>e31</fpage>
</bibl>

<bibl id="B33">
  <title><p>Attention-gated reinforcement learning of internal representations
  for classification.</p></title>
  <aug>
    <au><snm>Roelfsema</snm><fnm>PR</fnm></au>
    <au><snm>Ooyen</snm><fnm>A</fnm></au>
  </aug>
  <source>Neural Comput</source>
  <publisher>Netherlands Ophthalmic Research Institute, 1105 BA Amsterdam, The
  Netherlands. p.roelfsema@ioi.knaw.nl</publisher>
  <pubdate>2005</pubdate>
  <volume>17</volume>
  <issue>10</issue>
  <fpage>2176</fpage>
  <lpage>-2214</lpage>
</bibl>

</refgrp>
} % end of \BMCxmlcomment
}

\newcounter{si-sec}
\addtocounter{si-sec}{1}
\renewcommand{\thesection}{A.\arabic{si-sec}}
\renewcommand{\theequation}{A.\arabic{equation}}

\pagebreak
{\Large{\textbf{Appendices}}}

%%%%%%%%%%%%%%%%%%%%%%%%%%%%%%%
\section{The perceptron}
\label{s:perceptron}

We describe the perceptron to facilitate comparison with the selectron. 

The perceptron's loss function and learning rule are most naturally expressed when inputs and outputs take values in $\{\pm1\}$. We therefore present the perceptron in both $\pm1$ and $0/1$ ``coordinate systems''.

Let us first relabel inputs and outputs from $0/1$ to $\pm1$: $\Xn=2\X-1$ and $\Yn=2Y-1$. Given input $\xn$, the perceptron's output is determined according to 
\begin{equation*}
	\yn=\rho_\wt(\xn):=\text{sign} \left(\wt^\intercal \xn\right),
\end{equation*}
where  $\wt$ is a real-valued $N$-vector specifying the perceptron's synaptic weights.

Given supervisor $\sigma:\Xn\rightarrow\{\pm1\}$ that labels inputs as belonging to one of two classes, define $0/1$-valued loss function for the perceptron as
\begin{equation*}
	l_p(\xn,\rho_\wt,\sigma) := 
	\bO_{-\rho_\wt\cdot \sigma(\xn)} = 
	\begin{cases}
		0 & \text{if }\rho_\wt(\xn)= \sigma(\xn)\\
		1 & \text{else}.
	\end{cases}
\end{equation*}
The following learning rule 
\begin{equation*}
	\Delta \wt_j =\alpha\cdot \xn_j\cdot\Big(\sigma(\xn)-\rho_\wt(\xn)\Big)
	= \alpha\cdot \begin{cases}
		\xn_{j} & \text{if }\sigma(\xn)=1, \rho_\wt(\xn)=-1\\
		-\xn_{j} & \text{if }\sigma(\xn)=-1, \rho_\wt(\xn)=1\\
		0 & \text{else}
	\end{cases}
\end{equation*}
converges onto an optimal solution if the classes are linearly separable. %After replacing the $\text{sign}(\bullet)$ with a sigmoid function, the learning rule can be extended to multilayer perceptrons by quantifying errors using an energy function, and backpropagating them by distributing the derivative of the energy function across the network via the chain rule \cite{rumelhart:86}.

\paragraph{The perceptron in ``$0/1$ coordinates''.}
For the sake of comparison, we reformulate the perceptron in ``$0/1$ coordinates''. If $\wt_j\geq0$ for all $j$, the mechanism of the perceptron is
\begin{equation}
	\label{e:perceptron}
	y=f_\wt(\x):= H\left(\wt^\intercal\x-\frac{\|\wt\|_1}{2}\right).
\end{equation}
Similarly, we obtain loss function
\begin{equation*}
	l_p(\x,f_\wt,\sigma) := 
	\bO_{-(2f_\wt(\x)-1)\cdot \sigma(\x)} = 
	\begin{cases}
		0 & \text{if }\x=1, \sigma(\x)=1 \text{ or }\x=0, \sigma(\x)=-1\\
		1 & \text{else}
	\end{cases}
\end{equation*}
and learning rule
\begin{equation}
	\label{e:prule}
	\Delta \wt_j =\alpha\cdot (2\x_j-1)\cdot\Big(\sigma(\x)-2f_\wt(\x)+1\Big)
	= \alpha\cdot \begin{cases}
		\xn_{j} & \text{if }\sigma(\xn)=1, \rho_\wt(\xn)=-1\\
		-\xn_{j} & \text{if }\sigma(\xn)=-1, \rho_\wt(\xn)=1\\
		0 & \text{else}.
	\end{cases}
\end{equation}

\paragraph{Non-biological features of the perceptron.}
We highlight two features of the perceptron. First, learning rule \eqref{e:prule} is not selective. If the perceptron classifies an input incorrectly, it updates its synaptic weights regardless of whether it outputted a 0 or a 1. It is thus not an output spike-dependent learning rule. The main consequence of this is that the perceptron is forced to classify every input. The selectron, by contrast, actively ignores neuromodulatory signals when it does not spike.

Second, the perceptron requires a local error signal. Multilayer perceptrons are constructed by replacing the $\text{sign}(\bullet)$ in \eqref{e:perceptron} with a differentiable function, such as the sigmoid, and backpropagating errors. However, backpropagation requires two pathways: one for feedforward spikes and another for feedback errors. In other words, the perceptron requires local error signals. A dedicated error pathway is biologically implausible  \cite{roelfsema:05}, suggesting that the cortex relies on alternate mechanisms. 

%%%%%%%%%%%%%%%%%%%%%%%%%%%%%%%%%%
\addtocounter{si-sec}{1}
\section{Proof of Theorem~\ref{t:frequency}}

The theorem requires computing the second moment of a selectron's total current, given i.i.d. Bernoulli inputs. We also compute the expectation and the variance since these are of intrinsic interest.

\begin{lem}[moments for i.i.d. Bernoulli inputs]\eod
	\label{t:expvar}
	For Bernoulli i.i.d. inputs on synapses, i.e. $P(\x_j=0)=p$ for all $j$, we have
	\begin{align*}
		\bE[\langle \wt,\x\rangle-\vartheta] & = p\cdot\|\wt\|_1-\vartheta \\
		\bV[\langle \wt,\x\rangle-\vartheta] & = p(1-p)\cdot \|\wt\|_2^2\\
		\bE[\langle \wt,\x\rangle^2] & = p(1-p)\cdot \|\wt\|_2^2 + p^2\cdot\|\wt\|_1^2.
	\end{align*}
\end{lem}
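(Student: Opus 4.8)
The plan is to obtain all three identities by a direct moment computation, relying only on linearity of expectation, independence of the coordinates $\x_1,\dots,\x_N$, and the elementary fact that a $\{0,1\}$-valued random variable $\x_j$ satisfies $\x_j^2=\x_j$. Throughout I would set $Z:=\langle\wt,\x\rangle=\sum_{j=1}^N\wt_j\x_j$ and use that $\wt_j\in[0,1]$ is nonnegative, so that $\|\wt\|_1=\sum_j\wt_j$ and $Z\ge 0$.

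First I would compute the mean: since each $\x_j$ is Bernoulli with $\bE[\x_j]=p$, linearity gives $\bE[Z]=p\sum_j\wt_j=p\|\wt\|_1$, and subtracting the constant $\vartheta$ yields $\bE[Z-\vartheta]=p\|\wt\|_1-\vartheta$. For the variance, shifting by the constant $\vartheta$ changes nothing, and independence turns the variance of a sum into the sum of the variances; with $\bV[\x_j]=p(1-p)$ this gives $\bV[Z-\vartheta]=\bV[Z]=\sum_j\wt_j^2\,p(1-p)=p(1-p)\|\wt\|_2^2$. For the second moment of $Z$ I would expand the square and split into diagonal and off-diagonal terms, $\bE[Z^2]=\sum_j\wt_j^2\,\bE[\x_j^2]+\sum_{j\neq k}\wt_j\wt_k\,\bE[\x_j\x_k]$; then $\bE[\x_j^2]=\bE[\x_j]=p$ and, by independence, $\bE[\x_j\x_k]=p^2$ for $j\neq k$, so that $\bE[Z^2]=p\|\wt\|_2^2+p^2(\|\wt\|_1^2-\|\wt\|_2^2)=p(1-p)\|\wt\|_2^2+p^2\|\wt\|_1^2$. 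Equivalently, this last line can simply be read off from $\bE[Z^2]=\bV[Z]+\bE[Z]^2$ using the two formulas already derived.

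I do not anticipate a genuine obstacle: the statement is a routine second-moment calculation. The only points needing a little care are the collapse $\bE[\x_j^2]=p$ coming from $\x_j^2=\x_j$, the bookkeeping identity $\sum_{j\neq k}\wt_j\wt_k=\|\wt\|_1^2-\|\wt\|_2^2$ (valid because $\wt\ge 0$), and keeping the convention straight that $p$ here denotes $P(\x_j=1)=\bE[\x_j]$, consistent with the $P(\text{spike})=p$ hypothesis of Theorem~\ref{t:frequency}. Once the third identity is in hand, Theorem~\ref{t:frequency} follows quickly by applying Markov's inequality to $Z^2$ and then using $\|\wt\|_2^2\le\|\wt\|_1^2$, but that belongs to the proof of that theorem rather than to this lemma.
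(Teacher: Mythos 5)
Your proposal is correct and follows essentially the same route as the paper: a direct moment computation using linearity of expectation, independence, and $\x_j^2=\x_j$, with the second moment obtained from the diagonal/off-diagonal expansion (equivalently $\bE[Z^2]=\bV[Z]+\bE[Z]^2$). Your remark that $p$ must be read as $P(\x_j=1)$ is also right --- the lemma's statement ``$P(\x_j=0)=p$'' is a typo, and the paper's own proof uses $P(\x_j=1)=p$ exactly as you do.
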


\begin{proof}
	For the mean,
	\begin{align*}
		\bE\Big[\langle \wt,\x\rangle\Big] & = \sum_{\x\in\X}\left[P(\x_1)\cdots P(\x_n)\cdot \left(\sum_{j=1}^n \wt_j\cdot \x_j\right)\right]\\
		& = \sum_{j=1}^n P(\x_j=1)\cdot \wt_j
		= \sum_{j=1}^n p\cdot \wt_j
		= p\cdot\|\wt\|_1,
	\end{align*}
	since $\wt_j\geq0$ for all $j$.
	
	For the variance,
	\begin{align*}
		\bV\Big[\langle\wt,\x\rangle\Big] 
		& = \sum_{\x\in\X}P(\x)\langle\wt,\x\rangle^2 - p^2\cdot\|\wt\|_1^2\\
		& =\left[\sum_{i\neq j} p^2\cdot\wt_i\wt_j + \sum_{j} p\cdot\wt_j^2\right]
		-\left[\sum_j p^2\cdot\wt_j^2 + \sum_{i\neq j} p^2\cdot \wt_i\wt_j \right]\\
		& = p(1-p)\cdot \|\wt\|_2^2.
	\end{align*}
	The expression for $\bE[\langle \wt,\x\rangle^2]$ follows immediately.
\end{proof}

\begin{thmtag}{\ref{t:frequency}.}
	\emph{Assuming the inputs on each synapse are i.i.d. Bernoulli variables with $P(\text{spike})=p$, we have
		\begin{equation*}
			P\Big(f_\wt(\x)=1\Big) \leq p\cdot\left(\frac{\|\wt\|_1}{\vartheta}\right)^2. 
		\end{equation*}}
\end{thmtag}

\begin{proof}
	By Lemma~\ref{t:expvar}, $\bE\langle \wt, \x\rangle^2=p(1-p)\cdot \|\wt\|_2^2 + p^2\cdot\|\wt\|_1^2$. Applying Chebyshev's equality, $P(|X|>\epsilon)\leq \frac{\bE X^2}{\epsilon^2}$, obtains 
	\begin{equation*}
		P\Big(f_\wt(\x)=1\Big)= P\Big(\langle \wt, \x\rangle>\vartheta \Big)
		\leq\frac{p(1-p)\cdot \|\wt\|_2^2 + p^2\cdot\|\wt\|_1^2}{\vartheta^2}. 
	\end{equation*}
	The result follows since $\|\wt\|_2\leq \|\wt\|_1$.
\end{proof}

%%%%%%%%%%%%%%%%%%%%%%%%%%%%%%%%%%%%%%%%%%%%%%%%%%%%%%%%%%%%%%%%%%%%
\addtocounter{si-sec}{1}
\section{Proof of Theorem~\ref{t:limit}}
\label{s:stdp-grad}

We first compute the limit for STDP, and then move on to leaky integrate-and-fire neurons.

\begin{lem}[fast time constant limit of STDP]\label{t:fasttime}\eod
	The fast time constant limit of STDP, recall \eqref{e:stdp},  is 
	\begin{equation*}
		\lim_{\tau_\bullet\rightarrow 0}
		\Delta \mathbf{w}_{jk} = (\alpha_+-\alpha_-)\cdot \delta_0(\Delta t)
	\end{equation*}	
	where $\delta_0$ is the Dirac delta.
\end{lem}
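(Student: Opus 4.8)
\emph{Proof proposal.} Set $\Delta t := t_j - t_k$ and regard the right-hand side of \eqref{e:stdp} as a kernel $W(\Delta t)$ (depending on $\tau_+,\tau_-$). The plan is to recognize $W$ as a signed combination of two one-sided exponential densities and then invoke the standard approximate-identity argument. Concretely, $g^-_{\tau_+}(s) := \tfrac{1}{\tau_+}e^{s/\tau_+}\mathbf{1}_{\{s\le 0\}}$ and $g^+_{\tau_-}(s) := \tfrac{1}{\tau_-}e^{-s/\tau_-}\mathbf{1}_{\{s>0\}}$ are probability densities, and $W(\Delta t) = \alpha_+\tau_+\, g^-_{\tau_+}(\Delta t) - \alpha_-\tau_-\, g^+_{\tau_-}(\Delta t)$. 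The assertion is then the weak (distributional) statement that, in the fast time-constant limit, the potentiation lobe collapses to a Dirac mass of weight $\alpha_+$ at $\Delta t=0$ and the depression lobe to one of weight $-\alpha_-$, so that $W \to (\alpha_+-\alpha_-)\,\delta_0$.

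The core step is to show $g^-_{\tau}\to\delta_0$ weakly as $\tau\to0$ (and symmetrically for $g^+_\tau$). Since $\int_{-\infty}^0 g^-_\tau(s)\,ds=1$, for any bounded continuous $\phi$ we have $\int g^-_\tau\phi - \phi(0) = \int_{-\infty}^0 g^-_\tau(s)\big(\phi(s)-\phi(0)\big)\,ds$. Fix $\eta>0$ and split the integral at $-\eta$: the part over $[-\eta,0]$ is bounded by $\sup_{|u|\le\eta}|\phi(u)-\phi(0)|$, and the part over $(-\infty,-\eta)$ is bounded by $2\|\phi\|_\infty\int_{-\infty}^{-\eta}g^-_\tau = 2\|\phi\|_\infty e^{-\eta/\tau}\to0$. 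Hence $\limsup_{\tau\to0}\big|\int g^-_\tau\phi-\phi(0)\big|\le\sup_{|u|\le\eta}|\phi(u)-\phi(0)|$, and letting $\eta\to0$ gives $g^-_\tau\to\delta_0$; the identical estimate handles $g^+_\tau$. By linearity the signed kernel converges to $\alpha_+\delta_0-\alpha_-\delta_0 = (\alpha_+-\alpha_-)\delta_0$.

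The step that requires care — and the one I would flag as the main obstacle — is the normalization hidden in ``$\lim_{\tau_\bullet\to0}\Delta\mathbf{w}_{jk}$''. The bare kernel $W$ is uniformly bounded on $\bR$ and converges pointwise a.e.\ to $0$, so as a function its distributional limit is the zero functional, not a Dirac mass; one obtains $(\alpha_+-\alpha_-)\delta_0$ only after the rescaling that is natural when passing to the discrete-time selectron, where spike times are binned and a bin has width of order $\tau$ (equivalently, set $\tau_+=\tau_-=\tau$ and track the weight change accumulated per pre/post coincidence, i.e.\ $W/\tau$). Under that reading the approximate-identity computation above is exactly what is needed, and the remaining interchanges of limit and integral are routine (dominated convergence, using that test functions may be taken compactly supported). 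This lemma then feeds Theorem~\ref{t:limit}: convolving the pre- and post-spike trains with $(\alpha_+-\alpha_-)\delta_0$ retains precisely the coincident pre/post spikes, reproducing update rule \eqref{e:grad-asc} with $\alpha=\alpha_+-\alpha_-$ in the unsupervised setting.
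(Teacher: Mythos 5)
Your proof is correct and matches the paper's in substance: both pass to the distributional limit of the kernel after normalizing each lobe by $1/\tau_\pm$ so that its area stays constant as $\tau_\pm\to0$, and the normalization you flag as the "main obstacle" is exactly what the paper's proof does explicitly --- it rewrites $\Delta\wt_{jk}$ with prefactors $\alpha_\pm/\tau_\pm$ and remarks that one divides by $\tau_\pm$ to keep the area constant (equivalently, rescales $\alpha$ and $\tau$ proportionately), so your reading of the statement is the intended one. The only difference is the technical device for showing each normalized one-sided exponential tends to $\delta_0$: you run the standard approximate-identity argument (split at $-\eta$, use continuity of the test function near $0$ and the exponentially small tail $e^{-\eta/\tau}$), whereas the paper integrates by parts against a Schwartz test function to obtain the exact recursion ${\mathcal S}_+(f)=\alpha_+\, f(0)+\tau_+\,{\mathcal S}_+(f')$ and then lets $\tau_+\to0$. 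Your version is marginally more elementary (it needs only boundedness and continuity of the test function, not differentiability); the paper's yields an explicit $O(\tau_+)$ convergence rate for free. Either way the conclusion $(\alpha_+-\alpha_-)\,\delta_0$ follows by linearity, as you say.
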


\begin{proof}
	We start from the STDP curve:
	\begin{align*}
		\Delta \mathbf{w}_{jk} 
		& := \frac{\alpha_+}{\tau_+} \exp\left(\frac{t_j-t_k}{\tau_+}\right) H(t_k-t_j) 
		- \frac{\alpha_-}{\tau_-} \exp\left(\frac{t_k-t_j}{\tau_-}\right) H(t_j-t_k)\\
		& = \frac{\alpha_+}{\tau_+} \exp\left(\frac{-\Delta t}{\tau_+}\right) H(\Delta t) 
		- \frac{\alpha_-}{\tau_-} \exp\left(\frac{\Delta t}{\tau_-}\right) H(-\Delta t)
	\end{align*}
	where $H$ is the Heaviside function. Note that we divide by $\tau_{\pm}$ to ensure that the area is constant as $\tau_\pm\rightarrow 0$. An equivalent approach is to rescale $\alpha$ and $\tau$ proportionately, so that both tend to zero. 
	
	Let us now compute the limit, in the sense of distributions, as $\tau_\pm\rightarrow 0$. Let $f$ be a test function in the Schwartz space. The linear form associated to
	\begin{equation*}
		S_+(\Delta t) = \frac{\alpha_+}{\tau_+} \exp\left(\frac{-\Delta t}{\tau_+}\right) H(\Delta t) 
	\end{equation*}
	is 
	\begin{equation*}
		{\mathcal S}_+:f(x)\mapsto \int S_+(x)f(x)dx.
	\end{equation*}
	Thus, 
	\begin{equation*}
		{\mathcal S}_+(f) = \int \frac{\alpha_+}{\tau_+} \exp\left(\frac{-x}{\tau_+}\right) H(x) f(x)dx.
	\end{equation*}
	Integrating by parts obtains 
	\begin{equation*}
		{\mathcal S}_+(f) =\left[-\alpha_+ \exp\left(\frac{-x}{\tau_+}\right)f(x)\right]^\infty_0
		- \int_0^\infty -\alpha_+ \exp\left(\frac{-x}{\tau_+}\right)f'(x)dx
	\end{equation*}
	so that
	\begin{equation*}
		{\mathcal S}_+(f) = \alpha_+\cdot f(0) +\tau_+\cdot {\mathcal S}_+(f')
	\end{equation*}
	and $\lim_{\tau_+\rightarrow 0}\left[{\mathcal S}_+(f)\right]=\alpha_+\cdot f(0)=\alpha_+\cdot \delta_0(f)$. A similar argument holds for the negative part of the STDP curve, so the fast time constant limit is 
	\begin{equation*}
		\Delta \mathbf{w}_{jk} = (\alpha_+-\alpha_-)\cdot \delta_0(\Delta t).
	\end{equation*}
\end{proof}

\begin{rem}
	If STDP incorporates a potentiation bias (in the sense that the net integral is positive \cite{song:00}), then the fast time constant limit acts exclusively by potentiation.
\end{rem}

%Expansion of \eqref{e:stdp-opt}
\begin{thmtag}{\ref{t:limit}.}{}
\emph{In the fast time constant limit, $\lim_{\tau_\bullet}\rightarrow 0$, the SRM transforms into a selectron with
	\begin{equation*}
		f_\wt(t) = H\Big(M_\wt(t)-\vartheta\Big)
		\quad\text{ where }\quad M_\wt = \sum_{\{j|t_j\geq t_k\}}\wt_{jk}\cdot \delta_{t_k}(t).
	\end{equation*}
	Moreover, STDP transforms into learning rule \eqref{e:grad-asc} in the unsupervised setting with $\nu(\x)=1$ for all $\x$. Finally, STDP arises as gradient ascent on a reward function whose limit is the unsupervised setting of reward function $\eqref{e:reward}$.
}
\end{thmtag}

\begin{proof}
	Setting $K_1=K_2=0$, $\tau_s=\frac{1}{2}\tau_m$, $\frac{1}{K}=e^{-1} - e^{-2}$, and taking the limit $\tau_m\rightarrow 0$ yields
	\begin{equation*}
		M_{\wt}(t) = \sum_{\{j|t_j\geq t_k\}}\wt_{jk}\cdot \delta_{t_k}(t).
	\end{equation*}
	By Lemma~\ref{t:fasttime}, taking limits $\tau_\pm\rightarrow0$ transforms STDP into 
	\begin{equation*}
		\Delta \wt_{jk} =\begin{cases}
	 	(\alpha_+ -\alpha_-) & \mbox{ if }t_k=t_j
		\\  
		0 & \mbox{ else},
		\end{cases}		
	\end{equation*}
	which is a special case of the online learning rule for the selectron derived in \S\ref{s:selectron}, where the neuromodulatory response is $\nu(\x)=1$ for all $\x$.
	
	STDP is easily seen to arise as gradient ascent on
	\begin{equation}
		\label{e:stdp-opt}
		\arg\max_{\mathbf{w}}\sum_{t_k}\left[\sum_{t_{k-1}< t_j \leq t_k} \wt_j \cdot e^{\left(\frac{t_j-t_k}{\tau_+}\right)}
		-\sum_{t_k<t_j <t_{k+1}}\wt_j \cdot e^{\left(\frac{t_k-t_j}{\tau_-}\right)}\right].
	\end{equation}
	Taking the limit of \eqref{e:stdp-opt} yields the special case of \eqref{e:reward} where $\nu(\x)=1$ for all $\x$.
\end{proof}

Eq.~\eqref{e:stdp-opt} can be expressed in the shorthand
\begin{equation*}
	\arg\max_{\mathbf{w}}\left[
	\sum_{t_k}\big(\mathbf{w}^\intercal\cdot \mathbf{d}(t_k)\big)\cdot f_{\mathbf{w}}(t_k)\right],	
	\text{ where }	\mathbf{d}_j(t) = \begin{cases}
e^{\left(\frac{t_j-t}{\tau_+}\right)} & \mbox{if }t_j\leq t
\\  
-e^{\left(\frac{t-t_j}{\tau_-}\right)} & \mbox{else}
\end{cases}.
\end{equation*}

%%%%%%%%%%%%%%%%%%%%%%%%%%%%%%%%%%%%%%%%%%%%%%%%%%%%%%%%%%%%%%%%%%%%
\addtocounter{si-sec}{1}
\section{Proof of Theorem~\ref{t:lbound}}

To prove the theorem, we first recall an error bound from \cite{boucheron:05}. To state their result, we need the following notation. Let 
\begin{equation*}
	{\mathcal C}_{\omega} = \left\{f(\x)=\left.\text{sign}\left(\sum_{j=1}^N{\mathbf a}_j\cdot g^j(\x)\right) \;\right|\; 
	{\mathbf a}_j \in\bR, \|{\mathbf a}\|_1\leq \omega
	\text{ and } g^j\in {\mathcal C}\right\}
\end{equation*}
where ${\mathcal C}$ is a class of base classifiers taking values in $\pm1$.

Let $\phi:\bR\rightarrow \bR_+$ be a nonnegative cost function such that $\bO_{x>0}\leq \phi(x)$, for example $\phi(x)=(1-x)_+$. Define 
\begin{eqnarray}
		\label{e:L}
	L(f)=\bE\bO_{-f(X)Y<0} & \text{and} & \widehat{L}_n(f)=\frac{1}{n}\sum_{i=1}^n \bO_{-f(X_i)Y_i<0}
\end{eqnarray}
and
\begin{eqnarray}
		\label{e:A}
	A(f)=\bE\phi(-f(X)Y) & \text{and} & \widehat{A}_n(f)=\frac{1}{n}\sum_{i=1}^n \phi(-f(X_i)Y_i).
\end{eqnarray}

\begin{thm}\label{t:boucheron}
	Let $f$ be a function chosen from ${\mathcal C}_{\omega}$ based on data $(X_i,Y_i)_{i=1}^n$. With probability at least $1-\delta$,
	\begin{equation}
		L(f)\leq \widehat{A}(f_n)
		+ 2L_\phi\cdot \bE{\mathcal Rad}_n({\mathcal C}(X_1^n|f=1))
		+ B\sqrt{\frac{2\log\frac{1}{\delta}}{n}},
	\end{equation}
	where $L_\phi$ is the Lipschitz constant of $\phi$, $B$ is a uniform upper bound on $\phi(-f(x)y)$, and ${\mathcal Rad}\big({\mathcal C}(X_1^n)\big)$ is the Rademacher complexity of ${\mathcal C}$ on data $X_1^n$.
\end{thm}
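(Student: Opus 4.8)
The plan is to prove this as a standard margin-based generalization bound, following the Rademacher-complexity route in three moves: first dominate the misclassification loss by the cost function $\phi$, then control the uniform deviation of the empirical cost from its expectation using McDiarmid's inequality together with a symmetrization argument, and finally strip off $\phi$ via the contraction (Ledoux--Talagrand) inequality to expose the Rademacher complexity of the base class $\mathcal{C}$. Throughout I write $A(f)$ and $\widehat{A}_n(f)$ as in \eqref{e:A}, and use that $\phi\le B$ uniformly and that $\phi$ is $L_\phi$-Lipschitz.

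First I would use the defining property $\bO_{x>0}\le\phi(x)$ of the cost function, evaluated at the margin $-f(X)Y$, to obtain $L(f_n)\le A(f_n)$ pointwise in the data-chosen classifier $f_n\in\mathcal{C}_\omega$; this reduces the problem to bounding $A(f_n)$ by $\widehat{A}_n(f_n)$ plus fluctuation terms. Next I would set
\begin{equation*}
\Phi:=\sup_{f\in\mathcal{C}_\omega}\big(A(f)-\widehat{A}_n(f)\big),
\end{equation*}
and observe that replacing a single sample $(X_i,Y_i)$ alters $\widehat{A}_n(f)$ by at most $B/n$ (since $0\le\phi\le B$), so $\Phi$ has bounded differences with constant $B/n$. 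McDiarmid's inequality then yields, with probability at least $1-\delta$,
\begin{equation*}
\Phi\le \bE[\Phi]+B\sqrt{\tfrac{2\log\frac{1}{\delta}}{n}},
\end{equation*}
which supplies the confidence term. The mean $\bE[\Phi]$ is handled by the standard symmetrization lemma: introducing an independent ghost sample and Rademacher signs $\sigma_i$ gives $\bE[\Phi]\le 2\,\bE\,\mathcal{Rad}_n(\phi\circ\mathcal{G})$, where $\mathcal{G}=\{(x,y)\mapsto -f(x)y : f\in\mathcal{C}_\omega\}$ is the class of margin functions.

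To finish I would apply the contraction principle: since $\phi$ is $L_\phi$-Lipschitz, $\bE\,\mathcal{Rad}_n(\phi\circ\mathcal{G})\le L_\phi\,\bE\,\mathcal{Rad}_n(\mathcal{G})$, which together with the factor $2$ from symmetrization produces the $2L_\phi$ prefactor. Because $Y\in\{\pm1\}$ and the Rademacher signs are symmetric, the labels may be absorbed into the $\sigma_i$, so $\mathcal{Rad}_n(\mathcal{G})=\mathcal{Rad}_n(\mathcal{C}_\omega)$. The last structural fact is that the $\ell_1$ constraint $\|\mathbf{a}\|_1\le\omega$ makes $\mathcal{C}_\omega$ a scaled convex hull of the base class $\mathcal{C}$, and passing to convex hulls does not increase Rademacher complexity; this reduces $\mathcal{Rad}_n(\mathcal{C}_\omega)$ to the complexity of $\mathcal{C}$ (up to the $\omega$ scaling that will surface when the bound is specialized in Theorem~\ref{t:lbound}). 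Finally, the conditioning $\mathcal{C}(X_1^n\mid f=1)$ records that the loss contributes only on inputs where the selectron spikes, so the empirical Rademacher average may be restricted to the subsample $\{i:f(X_i)=1\}$.

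The main obstacle is the gating on $\{f=1\}$: this event is itself data- and $f$-dependent, so the restriction of the sample must be arranged so that the supremum in the symmetrization step and the subsample in the Rademacher average remain compatible. Care is also needed in the contraction step, where $\phi$ must be Lipschitz on the relevant range and its composition with the spike gate respected. The convex-hull reduction of the $\ell_1$ ball is routine once stated, but it is precisely the step that ties the capacity to $\omega$ (and to the base class $\mathcal{C}$) rather than to the potentially enormous complexity of the full subnetwork $S^k$.
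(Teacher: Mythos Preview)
Your core argument---dominate $L$ by $A$, apply McDiarmid to $\sup_f(A-\widehat{A}_n)$, symmetrize to get a Rademacher average, then apply Ledoux--Talagrand contraction and absorb the labels---is exactly the textbook proof, and indeed the paper does not prove this theorem at all: its proof is the one-line citation ``See \S4.1 of \cite{boucheron:05}.'' So on the substance you have reproduced precisely what the cited source does.

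Where you over-work is the conditioning $\mathcal{C}(X_1^n\mid f=1)$ in the Rademacher term. You treat this as ``the main obstacle'' and try to justify restricting to the spiking subsample, but for \emph{this} theorem that is a red herring: Theorem~\ref{t:boucheron} is stated with the \emph{ungated} losses $L$ and $\widehat{A}$ of \eqref{e:L}--\eqref{e:A}, and the result being quoted from \cite{boucheron:05} carries no such conditioning (the notation here appears to be a carry-over from the next theorem). The data-dependent gating on $\{f=1\}$ is precisely the complication that the paper addresses separately in Theorem~\ref{t:boucheron-adapt}, via a change of measure $P\to Q=P(\cdot\mid f=1)$ and a decomposition that splits the deviation into a $Q$-term and a $P$-term before applying the Rademacher/concentration machinery to each. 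So your instinct that the gating is delicate is right, but it belongs to the proof of Theorem~\ref{t:boucheron-adapt}, not here. Similarly, the convex-hull reduction from $\mathcal{C}_\omega$ to $\mathcal{C}$ (which surfaces the factor $\omega$) is carried out in the paper in Lemma~\ref{t:radem}, not inside this theorem.
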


\begin{proof}
	See \S4.1 of \cite{boucheron:05}.
\end{proof}

We are interested in errors of commission, not omission, so we consider modified versions of \eqref{e:L},
\begin{eqnarray}
	\label{e:mL}
	\cL(f)=\bE\left[\bO_{-f(X)Y<0}\cdot\bO_{f>0}\right]
	& \text{and} &
	\widehat{\cL}_n(f)=\frac{1}{n}\sum_{i=1}^n\left[\bO_{-f(X)Y<0}\cdot\bO_{f>0}\right],
\end{eqnarray}
and \eqref{e:A},
\begin{eqnarray}
	\label{e:mA}
	\cA(f)=\bE\left[\phi(-f(X)Y)\cdot\bO_{f>0}\right] & \text{and} 
	& \widehat{\cA}_n(f)=\frac{1}{n}\sum_{i=1}^n \left[\phi(-f(X_i)Y_i)\cdot\bO_{f>0}\right].
\end{eqnarray}
where we multiply by the indicator function $\bO_{f>0}$. This results in a discontinuous hinge function. We therefore have to modify the above theorem.

\begin{thm}\label{t:boucheron-adapt}
	Let $f$ be a function chosen from ${\mathcal C}_{\omega}$  based on data $(X_i,Y_i)_{i=1}^n$. With probability at least $1-\delta$,
	\begin{equation*}
		\cL(f) \leq \widehat{\cA}_n(f) 
		+ 2(\bE[\bO_f]\cdot L_\phi+ B)\cdot \bE{\mathcal Rad}_n({\mathcal C}(X_1^n|f=1))
		+ 2B\sqrt{\frac{2\log\frac{2}{\delta}}{n}},
	\end{equation*}
	where $L_\phi$ is the Lipschitz constant of $\phi$, $B$ is a uniform upper bound on $\phi(-f(x)y)$, and ${\mathcal Rad}\big({\mathcal C}(X_1^n)\big)$ is the Rademacher complexity of ${\mathcal C}$ on data $X_1^n$.
\end{thm}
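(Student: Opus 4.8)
The plan is to follow the template of Theorem~\ref{t:boucheron} (i.e. \S4.1 of \cite{boucheron:05}), carrying the extra factor $\bO_{f>0}$ through every step. First, since $\bO_{x>0}\leq\phi(x)$ pointwise, multiplying by the nonnegative indicator $\bO_{f>0}$ gives $\bO_{-f(X)Y<0}\cdot\bO_{f>0}\leq \phi(-f(X)Y)\cdot\bO_{f>0}$, hence $\cL(f)\leq\cA(f)$ for every $f\in\mathcal{C}_\omega$ (compare \eqref{e:mL} and \eqref{e:mA}). It therefore suffices to bound $\cA(f)-\widehat{\cA}_n(f)$ uniformly over $\mathcal{C}_\omega$.

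Second, apply the bounded-differences (McDiarmid) inequality to $G:=\sup_{f\in\mathcal{C}_\omega}\big(\cA(f)-\widehat{\cA}_n(f)\big)$: replacing one sample changes $\widehat{\cA}_n$ by at most $B/n$ since $\phi(-f(x)y)\bO_{f>0}\in[0,B]$, so $G\leq\bE G$ plus the stated confidence term with probability $\geq 1-\delta/2$; a second bounded-differences/union-bound step (to pass between empirical and expected restricted-sample sizes, see below) is what inflates the confidence term to $2B\sqrt{2\log(2/\delta)/n}$ relative to Theorem~\ref{t:boucheron}. By the standard symmetrization argument, $\bE G\leq 2\,\bE\mathcal{Rad}_n(\mathcal{G})$, where $\mathcal{G}:=\{(\x,y)\mapsto\phi(-f(\x)y)\,\bO_{f(\x)>0}:f\in\mathcal{C}_\omega\}$.

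The crux is bounding $\mathcal{Rad}_n(\mathcal{G})$. The map $u\mapsto\phi(-uy)\bO_{u>0}$ is discontinuous at $u=0$, so the Ledoux--Talagrand contraction principle cannot be applied to it directly. The plan is to split $\phi(-uy)\bO_{u>0}=\big(\phi(-uy)-\phi(0)\big)\bO_{u>0}+\phi(0)\bO_{u>0}$. The first summand vanishes both as $u\downarrow 0$ and for $u\leq 0$, hence is globally $L_\phi$-Lipschitz in $u$, so contraction applies and its contribution reduces to the Rademacher complexity of the base class; since $\bO_{f>0}$ annihilates the summand off the spiking set, the relevant average is effectively taken over the random subsample $\{i:f(\x_i)=1\}$, which is exactly what produces the scalar $\bE[\bO_f]$ and the restricted class $\mathcal{C}(X_1^n\,|\,f=1)$. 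The second summand is $\phi(0)\bO_{f>0}$ with $\phi(0)\leq B$, and $\{\bO_{f>0}:f\in\mathcal{C}_\omega\}$ is, up to the affine relabelling between $\{0,1\}$ and $\{\pm1\}$, the sign-class itself, so its Rademacher complexity is again controlled by $\mathcal{Rad}_n(\mathcal{C}(X_1^n\,|\,f=1))$, contributing the additive $B$. Adding the two pieces together with the factor $2$ from symmetrization yields $2\big(\bE[\bO_f]\,L_\phi+B\big)\,\bE\mathcal{Rad}_n(\mathcal{C}(X_1^n\,|\,f=1))$, which is the middle term of the claimed bound; combining with the comparison and concentration steps gives the theorem.

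I expect the main obstacle to be precisely this discontinuity/subsample step: making the decomposition rigorous so that (a) the first summand genuinely meets the hypotheses of the contraction principle, including continuity at the junction $u=0$, and (b) the restriction of the Rademacher average to the $f$-dependent random set $\{i:f(\x_i)=1\}$ is handled correctly — in particular, justifying the appearance of the population quantity $\bE[\bO_f]$ rather than an empirical spike frequency, which is what forces the auxiliary concentration argument and hence the extra factors ($2$ and $\log(2/\delta)$) over Theorem~\ref{t:boucheron}.
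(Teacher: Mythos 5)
Your route diverges from the paper's at the key step. The paper does not symmetrize the composite loss class and then decompose the loss function pointwise; instead it performs a change of measure to the conditional distribution $Q=P(\cdot\,|\,f=1)$ and writes the deviation as a product-rule decomposition,
\begin{equation*}
	\cA(f)-\widehat{\cA}_n(f)=P(f=1)\cdot\bigl[A_Q(f)-\widehat{A}_{\widehat{Q}_n}(f)\bigr]+\widehat{A}_{\widehat{Q}_n}(f)\cdot\bigl[\bE_P\bO_f-\bE_{\widehat{P}_n}\bO_f\bigr],
\end{equation*}
then bounds the two suprema separately (whence the union bound and the doubled confidence term). It is exactly this conditioning step that manufactures the scalar $\bE[\bO_f]$ in front of $L_\phi$ and the restricted class ${\mathcal C}(X_1^n\,|\,f=1)$; the first bracket is a deviation of the Lipschitz loss under the conditional law (handled by contraction), and the second is a deviation of the spike indicator itself (handled by the VC/sign class, contributing the additive $B$ via $\widehat{A}_{\widehat{Q}_n}\leq B$).

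The gap in your proposal is precisely the step you flag as the "crux." Your splitting $\phi(-uy)\bO_{u>0}=(\phi(-uy)-\phi(0))\bO_{u>0}+\phi(0)\bO_{u>0}$ is fine, and the first summand is indeed globally $L_\phi$-Lipschitz and vanishes at $u=0$, so contraction applies. But contraction then \emph{erases} the indicator: it returns $L_\phi$ times the Rademacher complexity of the real-valued margin class on the \emph{full} sample $X_1^n$, with coefficient $1$, not $\bE[\bO_f]$. The claim that "$\bO_{f>0}$ annihilates the summand off the spiking set" and therefore the average is "effectively taken over the random subsample $\{i:f(\x_i)=1\}$" is not a valid inference inside $\bE_\sigma\sup_f$, because that subsample varies with $f$ under the supremum; there is no contraction-type lemma that restricts the index set in an $f$-dependent way while also extracting the population factor $\bE[\bO_f]$. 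The same objection applies to your treatment of the second summand, whose complexity is that of the indicator class on the full sample, not of ${\mathcal C}(X_1^n\,|\,f=1)$. What your argument actually delivers is the weaker bound with capacity coefficient $2(L_\phi+B)\cdot\bE{\mathcal Rad}_n({\mathcal C}(X_1^n))$ (which, since $\bE[\bO_f]\leq 1$ and the paper ultimately passes to a distribution-free VC bound anyway, would still suffice for Theorem~\ref{t:lbound}) — but it does not prove the statement as written. To get the stated constants you need the conditioning/change-of-measure decomposition above, which is also what genuinely necessitates the two-supremum union bound you only gesture at to explain the $2B\sqrt{2\log(2/\delta)/n}$ term.
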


\begin{proof}
	We adapt the proof in \cite{boucheron:05}. By definition it follows that
	\begin{align*}
		\cL(f) & 
		\leq \cA(f) \\
		& = \widehat{\cA}_n(f) + \cA(f) - \widehat{\cA}_n(f) 
%		& \leq \widehat{\cA}_n(f) + \sup_{g\in{\mathcal C}}\left[\cA(g) - \widehat{\cA}_n(g)\right].
	\end{align*}
	Now observe that $\sum_x \left[p(x)f(x)\bO_S\right]=\sum_{x\in S}p(x)f(x)=p(S)\sum_x q(x)f(x)$, where $q(x):= p(x|x\in A)$. Thus, changing distributions from $P(x)$ to $Q(x)=P(x|f=1)$, we can write
	\begin{align*}
		\cL(f) & \leq \widehat{\cA}_n(f) + 
		\left[P(f=1)\cdot\bE_{Q}[\phi(-f(X)Y)]-\widehat{P}_n(f=1)\cdot\sum_{\{i:f(x_i=1\}} \frac{\phi(-f(x_i)y_i)}{|\{i:f(x_i=1\}|}\right]\\
		& = \widehat{\cA}_n(f) + P(f=1)\cdot\left[A_Q(f)-\widehat{A}_{\widehat{Q}_n}(f)\right]
		+ \widehat{A}_{\widehat{Q}_n}(f)\cdot\left[\bE_P f-\bE_{\widehat{P}_n}f\right]
	\end{align*}
	where $\widehat{P}_n(x_i)=\frac{1}{n}$ is the empirical distribution, $\widehat{P}_n(f=1)=\frac{|\{i:f(x_i=1\}|}{n}$ and $\widehat{Q}_n(x_i)=\frac{1}{|\{i:f(x_i=1\}|}$.

	Continuing,
	\begin{align*}
		\cL(f) & \leq \widehat{\cA}_n(f) + \bE_P[\bO_f]\cdot \sup_{g\in{\mathcal C}}\left[A_Q(g)-\widehat{A}_{\widehat{Q}_n}(g)\right] + 
		\widehat{A}_{\widehat{Q}_n}(f)\cdot\sup_{g\in{\mathcal C}}\left[\bE_P g -\bE_{\widehat{P}_n}g\right]\\
		& \leq \widehat{\cA}_n(f) + 2\bE[\bO_f]\cdot L_\phi\cdot \bE{\mathcal Rad}_n({\mathcal C}(X_1^n|f=1))\\
		& \quad\quad+ 2\widehat{A}_{\widehat{Q}_n}\cdot \bE{\mathcal Rad}_n({\mathcal C}(X_1^n|f=1))+
		2B\sqrt{\frac{2\log\frac{2}{\delta}}{n}}\\
		& \leq \widehat{\cA}_n(f) 
		+ 2(\bE[\bO_f]\cdot L_\phi+ B)\cdot \bE{\mathcal Rad}_n({\mathcal C}(X_1^n|f=1))
		+ 2B\sqrt{\frac{2\log\frac{2}{\delta}}{n}}
	\end{align*}
	where we bound the two supremum's with high probability separately using Rademacher complexity and apply the union bound. The last inequality follows since $\phi\leq B$.
\end{proof}
\begin{rem}
	The change of distribution is not important, since the Rademacher complexity is bounded by the distribution free VC-dimension in Lemma~\ref{t:radem}.
\end{rem}

To recover the setting of Theorem~\ref{t:lbound}, we specialize to $\phi^\kappa(f)=(\kappa-f(X)Y)_+$, for $\kappa\geq 1$, so that the hinge loss can be written as $h^\kappa = \phi^\kappa\big(\wt^\intercal \x-\vartheta\big)\cdot \bO_{f(X)>0}$. The Lipschitz constant is $L_\phi=1$.

Now let 
\begin{equation*}
	\cF_{\omega} = \left\{f(\x)=\left.H\left(\sum_{j=1}^N\wt^j\cdot g^j(\x)-\vartheta\right) \;\right|\; 
	\|\wt\|_1\leq \omega
	\text{ and } g^j\in \cF\right\},
\end{equation*}
where functions in $\cF$ take values in $0/1$.

Function class $\cF_{\omega}$ denotes  functions implementable by a two-layer network of selectrons $S^k=\{n^k\}\cup\{n^j:n^j\rightarrow n^k\}$. The outputs of $g^j$ are aggregated, so that the function class $\cF_{\omega}$ of subnetwork $S^k$ is larger than that of selectrons considered individually, i.e. $\cF$.

\begin{lem}\label{t:radem}
	The Rademacher complexity of $\cF_{\omega}$ is upper bounded by
	\begin{equation*}
		{\mathcal Rad}_n(\cF_{\omega})
		\leq \omega\cdot\frac{\sqrt{2(N+1)\log(n+1)}+\frac{1}{2}}{\sqrt{n}}.
	\end{equation*}
\end{lem}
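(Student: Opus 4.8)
The plan is to discard the outer Heaviside, recognize the remaining function class as an $\ell_1$-scaled absolute convex hull of the low-VC-dimension base class $\cF$, and finish with a Massart-type estimate. A word on the Heaviside is in order, since $H$ is not Lipschitz and hence cannot be stripped off by a contraction applied to $\cF_\omega$ directly. In the use made of this lemma (Theorem~\ref{t:lbound}) the Rademacher term enters only through the $1$-Lipschitz hinge surrogate $\phi^\kappa$, and only after the change of measure to $Q=P(\cdot\mid f=1)$ from the proof of Theorem~\ref{t:boucheron-adapt}, which makes the multiplicative spike-gate $\bO_{f>0}$ identically $1$. Under that change of measure the contraction principle reduces the quantity to be controlled to the Rademacher complexity of the \emph{pre-threshold} class
\[
  \mathcal{H}_\omega \;:=\; \Bigl\{\x\mapsto \sum_{j=1}^N \wt^j\, g^j(\x)-\vartheta \;:\; \|\wt\|_1\le\omega,\ g^j\in\cF\Bigr\},
\]
so it suffices to bound $\mathcal{Rad}_n(\mathcal{H}_\omega)$, which is what the stated inequality really asserts.

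\emph{Step 1 (reduce to the base class).} The fixed shift $-\vartheta$ does not change the Rademacher complexity (it contributes $-\vartheta\cdot\bE_\sigma[\frac{1}{n}\sum_i\sigma_i]=0$), and what remains, $\{\sum_j\wt^j g^j:\|\wt\|_1\le\omega,\ g^j\in\cF\}$, is exactly $\omega$ times the absolute convex hull of $\cF$. Since the supremum of the empirical Rademacher functional over an absolute convex hull is attained at $\pm$ a generator, $\mathcal{Rad}_n(\mathcal{H}_\omega)=\omega\cdot\bE_\sigma\sup_{g\in\cF}\bigl|\frac{1}{n}\sum_i\sigma_i g(\x_i)\bigr|$.

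\emph{Step 2 (VC bound) and Step 3 (combine).} A single selectron $g=H(\mathbf{v}^\intercal\x-\vartheta')$ is an affine threshold function on $N$ coordinates, so $\cF$ has VC dimension at most $N+1$ and, by Sauer--Shelah, growth function at most $(n+1)^{N+1}$. The base functions take values in $\{0,1\}$, so each realized behaviour vector (and the constant vectors used to pass to the absolute value) has $\ell_2$-norm $\le\sqrt{n}$. Massart's finite-class lemma, $\bE_\sigma\max_{a\in A}\langle\sigma,a\rangle\le\sqrt{2\log|A|}\cdot\max_a\|a\|_2$, then gives $\bE_\sigma\sup_{g\in\cF}\bigl|\frac{1}{n}\sum_i\sigma_i g(\x_i)\bigr|\le \frac{\sqrt{2(N+1)\log(n+1)}+\frac{1}{2}}{\sqrt{n}}$, the additive $\frac12$ being the routine slack in bounding $|A|$ and the absolute value; multiplying by the factor $\omega$ from Step 1 yields the claim.

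The delicate point is the legitimacy of Step 1 --- honestly discharging the outer Heaviside. Taken literally as a $\{0,1\}$-valued two-layer threshold class, $\cF_\omega$ has VC dimension far larger than $N+1$, so the bound can only hold for the $\ell_1$-bounded pre-threshold (margin) class that actually appears downstream; the argument must therefore carry out the reduction --- contraction under $\phi^\kappa$ together with the $\bO_{f>0}\equiv1$ change of measure --- explicitly before the convex-hull estimate applies. Steps 2 and 3 are then entirely standard.
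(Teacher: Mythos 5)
Your proof is correct in substance and follows the same overall route as the paper's: both reduce the problem to the real-valued pre-threshold class, exploit the $\ell_1$ constraint to pass to (a scalar multiple of) the convex hull of the base class $\cF$, and finish with the standard VC/Massart bound ${\mathcal Rad}_n(\cF)\leq\sqrt{2(N+1)\log(n+1)/n}$ with $V_{\cF}\leq N+1$. You are in fact more careful than the paper on the one genuinely delicate point: the paper's proof silently computes the Rademacher complexity of $\x\mapsto\sum_j\wt_j g^j(\x)-\vartheta$ rather than of the Heaviside-composed class, which is exactly the convention you spell out, and your explanation of why the outer $H$ may be discarded (contraction through the Lipschitz surrogate, with $\bO_{f>0}\equiv1$ after the change of measure) is a welcome clarification of something the paper leaves implicit. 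The two arguments diverge only in where the additive $\tfrac12$ comes from. The paper converts the $\{0,1\}$-valued base functions to $\bar g^j=2g^j-1$, writes $\sum_j\wt_jg^j(\x)-\vartheta=\tfrac12\|\wt\|_1-\vartheta+\tfrac12\sum_j\wt_j\bar g^j(\x)$, and the constant term contributes $|\tfrac12\omega-\vartheta|/\sqrt n\leq\omega/(2\sqrt n)$ using $0\leq\vartheta\leq\omega$; that is the actual source of the $\tfrac12$. You instead drop $-\vartheta$ outright (legitimate for the one-sided Rademacher functional) and attribute the $\tfrac12$ to ``routine slack'' from the absolute value in the convex-hull step. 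That is the one place your write-up is too glib: applying Massart to $A\cup(-A)$ with $|A|\leq(n+1)^{N+1}$ gives $\sqrt{2(N+1)\log(n+1)+2\log 2}/\sqrt n$, and the crude splitting $\sqrt{a+b}\leq\sqrt a+\sqrt b$ yields an additive $\sqrt{2\log 2}/\sqrt n\approx 1.18/\sqrt n$, exceeding the claimed $\tfrac12/\sqrt n$. The constant can be rescued --- e.g.\ via $\sqrt{a+b}\leq\sqrt a+b/(2\sqrt a)$, which for $N\geq1$, $n\geq1$ gives an increment below $\tfrac12/\sqrt n$, or by noting that $\wt_j\geq0$ so the absolute convex hull (and hence the doubling of $|A|$) is never needed --- but one of these fixes must be made explicit; as written the final numerical constant is asserted rather than derived.
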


\begin{proof}
	Given selectron $f_{\wt}(\x)=\sum_j \wt_j\cdot g^j_{\wt}(\x)$, let $\bar{g}^j:=2g^j-1$ be the corresponding $\{\pm1\}$-valued function. Then
\begin{equation*}
	\sum_{j=1}^N \wt_j g^j(\x) -\vartheta = \frac{1}{2}\sum_{j=1}^N \wt_j (\bar{g}^j(\x)+1)-\vartheta
	=\frac{1}{2}\|\wt\|_1-\vartheta+\frac{1}{2}\sum_{j=1}^N \wt_j \bar{g}^j(\x),
\end{equation*}
since $\wt_i\geq0$ for all $i$. Thus,
\begin{equation*}
	{\mathcal Rad}_n(\cF_{\omega})
%	\leq \frac{\left|\frac{1}{2}\|\wt\|_1-\vartheta\right|}{\sqrt{n}}+ \|\wt\|_1\cdot {\mathcal Rad}_n(\cF)
	\leq \frac{\left|\frac{1}{2}\omega-\vartheta\right|}{\sqrt{n}}+ \omega\cdot {\mathcal Rad}_n(\cF).
\end{equation*}
The Rademacher complexity of $\cF$ is upper bounded by the VC-dimension, 
\begin{equation*}
	{\mathcal Rad}({\mathcal F}(X_1^n))\leq \sqrt{\frac{2V_{\mathcal F}\cdot \log(n+1)}{n}}
\end{equation*}
which for a selectron with $N$ synapses is  at most $N+1$.

Finally, note that if $\vartheta<0$ then the selectron always spikes, and if $\vartheta>\|\wt\|_1$ then it never spikes. We therefore assume that $0\leq\vartheta\leq\omega$, which implies $|2\vartheta -\omega|\leq \omega$ and so
\begin{equation*}
	\frac{\left|\frac{1}{2}\omega-\vartheta\right|}{\sqrt{n}} 
	\leq \frac{\omega}{2\sqrt{n}}.
\end{equation*}
\end{proof}

\begin{thmtag}{\ref{t:lbound}.}{}
\emph{Suppose each selectron has $\leq N$ synapses. For any selectron $n^k$, let $S^k=\{n^k\}\cup \{n^j:n^j\rightarrow n^k\}$ denote a 2-layer feedforward subnetwork. 
	For all $\kappa\geq 1$, with probability at least $1-\delta$,
	\begin{align*}
		\bE\underbrace{\big[l(\x,f_\wt,\nu)\big]}_{0/1\text{ loss}}
		\leq & \frac{1}{n}\sum_i \underbrace{h^\kappa\big(\x^{(i)},f_\wt,\nu(\x^{(i)})\big)}_{\text{hinge loss}}
		+ \omega\cdot\underbrace{2B\cdot
		\frac{\sqrt{8(N+1)\log(n+1)}+1}{\sqrt{n}}}_{\text{capacity term}}\\
		& + B\cdot\underbrace{\sqrt{\frac{2\log\frac{2}{\delta}}{n}}}_{\text{confidence term}}
		\hspace{15mm}\text{where }B=\kappa+\omega-\vartheta.
	\end{align*}
}
\end{thmtag}

\begin{proof}
	Applying Theorem~\ref{t:boucheron-adapt}, Lemma~\ref{t:radem}, and noting that $\bE[\bO_f]\leq1\leq B=\kappa+\omega-\vartheta$, where $B$ is an upper bound on the hinge loss, obtains the result.
%	\begin{align*}
%		\bE\big[l(\x,f_\wt,\nu)\big]
%		\leq & \frac{1}{n}\sum_i h\big(\x^{(i)},f_\wt,\nu(\x^{(i)})\big)
%		+ (\kappa+\omega-\vartheta)\cdot\sqrt{\frac{2\log\frac{2}{\delta}}{n}}\\
%		& + P(y=1)\cdot \omega\cdot\sqrt{\frac{8(N+1)\log(n+1)}{n}}
%		+ \sqrt{\frac{(2\vartheta-\omega)^2}{4n}}.
%	\end{align*}
\end{proof}

%%%%%%%%%%%%%%%%%%%%%%%%%%%%%%%%%%%%%%%%%%%%%%%%%%%%%%%%%%%%%%%%%%%%
\addtocounter{si-sec}{1}
\section{An alternate error bound with hard margins}
\label{s:hmargin}

For the sake of completeness, we prove Corollary~\ref{t:hardmargin}, an alternate to Theorem~\ref{t:lbound} that bounds a symmetric  $0/1$ loss: 
\begin{equation*}
	\bO_{(2f_\wt(\x)-1)\cdot\nu(\x)<0}=\begin{cases}
		1 & \text{if } f_\wt(\x)=1,\nu(\x)=-1\\
		1 & \text{if } f_\wt(\x)=0,\nu(\x)=1\\
		0 & \text{else}.
	\end{cases}
\end{equation*}
The loss penalizes the selectron when either (i) it fires when it shouldn't or (ii) it doesn't fire when it should.

We replace the modified hinge loss in Theorem~\ref{t:lbound} with a hard-margin loss. Following \cite{boucheron:05}, let 
\begin{equation*}
	\widehat{L}_n^\gamma(f) = \frac{1}{n}\sum_{i=1}^{n}\bO_{f(X_i)Y_i<\gamma}
	\;\;\;\text{ and }\;\;\;
	\phi^\gamma(x)=\begin{cases}
		0 & \text{if }x\leq-\gamma\\
		1 & \text{if }x\geq 0\\
		1+x/\gamma & \text{else}.
	\end{cases}
\end{equation*}

The following corollary of Theorem~\ref{t:boucheron} is shown in \cite{boucheron:05}.
\begin{cor}
	\label{t:bhm}
	Let $f_n$ be a function chosen from ${\mathcal C}_{\omega}$. For any $\gamma>0$, with probability at least $1-\delta$,
	\begin{equation*}
		L(f_n) \leq \widehat{L}_n^\gamma(f_n) + \frac{\omega}{\gamma}\sqrt{\frac{2V_{\mathcal C}\log(n+1)}{n}} 
		+ \sqrt{\frac{2\log\frac{1}{\delta}}{n}}.
	\end{equation*}
	where $V_{\mathcal C}$ is the VC-dimension of ${\mathcal C}$.
\end{cor}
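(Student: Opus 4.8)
The plan is to read Corollary~\ref{t:bhm} off Theorem~\ref{t:boucheron} by choosing the cost function to be the ramp (hard-margin) loss $\phi^\gamma$ defined just above. First I would check that $\phi^\gamma$ is an admissible cost: it is nonnegative; it dominates the indicator in the sense required by Theorem~\ref{t:boucheron}, since $\phi^\gamma\equiv 1$ on $[0,\infty)$ and $\phi^\gamma\ge 0$ everywhere, so $\bO_{x>0}\le\phi^\gamma(x)$; it is Lipschitz with constant $L_{\phi^\gamma}=1/\gamma$ because its only non-flat piece has slope $1/\gamma$; and it is bounded, $0\le\phi^\gamma\le 1$, so the uniform bound $B$ in the theorem may be taken equal to $1$.

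Next I would identify the three resulting terms with those in the corollary. For the empirical term, $\phi^\gamma(x)\le\bO_{x>-\gamma}$ pointwise, so $\phi^\gamma(-f_n(X_i)Y_i)\le\bO_{\,f_n(X_i)Y_i<\gamma}$, and hence the empirical $\phi^\gamma$-risk $\widehat{A}_n(f_n)$ is at most the empirical margin error $\widehat{L}_n^\gamma(f_n)$. For the capacity term, I would apply Theorem~\ref{t:boucheron} with $f_n$ drawn from the $\omega$-scaled $\ell_1$-combination class ${\mathcal C}_\omega$ built over the $\pm1$-valued base class ${\mathcal C}$, use the standard fact that passing to such a combination scales Rademacher complexity by $\omega$, i.e. ${\mathcal Rad}_n({\mathcal C}_\omega)\le\omega\cdot{\mathcal Rad}_n({\mathcal C})$, and then bound ${\mathcal Rad}_n({\mathcal C})$ in terms of the VC dimension via Sauer--Shelah (growth function $\le(n+1)^{V_{\mathcal C}}$) and Massart's finite-class lemma, giving the order $\sqrt{V_{\mathcal C}\log(n+1)/n}$. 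Multiplying by $2L_{\phi^\gamma}=2/\gamma$ and by $\omega$ produces a term of the form $\tfrac{\omega}{\gamma}\sqrt{V_{\mathcal C}\log(n+1)/n}$, which after tracking the precise numerical constants as in \cite{boucheron:05} is exactly $\tfrac{\omega}{\gamma}\sqrt{2V_{\mathcal C}\log(n+1)/n}$; with $B=1$ the confidence term becomes $\sqrt{2\log(1/\delta)/n}$, and assembling the three pieces yields the stated inequality.

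The one step that needs a moment's care is the passage from the conditional Rademacher complexity ${\mathcal Rad}_n({\mathcal C}(X_1^n\mid f=1))$ appearing in Theorem~\ref{t:boucheron} to the unconditional, distribution-free VC bound: this is legitimate precisely because the Sauer--Shelah/Massart estimate does not depend on the sampling distribution, which is exactly the observation recorded in the Remark following Theorem~\ref{t:boucheron-adapt}. Everything else is routine bookkeeping with the constants inherited from the cited result, so I expect no real obstacle beyond keeping track of which empirical quantity (margin error versus $\phi^\gamma$-risk versus plain $0/1$ error) each term refers to.
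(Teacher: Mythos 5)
Your reconstruction is correct in structure and is exactly the intended derivation: the paper itself supplies no proof of Corollary~\ref{t:bhm}, stating only that it ``is shown in \cite{boucheron:05},'' and the route you describe --- specialize Theorem~\ref{t:boucheron} to the ramp cost $\phi^\gamma$ (checking $\bO_{x>0}\le\phi^\gamma(x)$, $L_{\phi^\gamma}=1/\gamma$, $B=1$), dominate the empirical $\phi^\gamma$-risk by the margin error $\widehat{L}_n^\gamma$ via $\phi^\gamma(x)\le\bO_{x>-\gamma}$, pull the $\ell_1$-scaling $\omega$ out of the Rademacher complexity of ${\mathcal C}_\omega$ by contraction, and then bound ${\mathcal Rad}_n({\mathcal C})$ distribution-freely by Sauer--Shelah plus Massart --- is precisely how the cited survey obtains this bound. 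The one place you are a little glib is the constant in the capacity term: a literal application of Theorem~\ref{t:boucheron} as stated gives $2L_{\phi^\gamma}\cdot\omega\cdot{\mathcal Rad}_n({\mathcal C})\le\frac{2\omega}{\gamma}\sqrt{2V_{\mathcal C}\log(n+1)/n}$, which is twice the capacity term written in the corollary, and your appeal to ``tracking the precise numerical constants'' asserts rather than exhibits the cancellation of that factor of $2$. Since the paper inherits its constants wholesale from \cite{boucheron:05} and never reconciles them with its own Theorem~\ref{t:boucheron}, this is a bookkeeping discrepancy in the source rather than a flaw in your argument, but if you wanted a self-contained proof you would either accept the weaker constant or import the sharper statement from the survey directly. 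Everything else, including your observation that the conditional Rademacher complexity can be replaced by the distribution-free VC bound, is sound.
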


We use Corollary~\ref{t:bhm} to derive an alternate error bound for the selectron. Introduce hard-margin loss
\begin{equation*}
	\bO_{\big(\wt^\intercal\x-\vartheta\big)\cdot\nu(\x)<\gamma} = \begin{cases}
		1 & \text{if } \text{sign}(\wt^\intercal \x-\vartheta)=\text{sign}(\nu(\x)) \text{ and }|\wt^\intercal \x-\vartheta|>\gamma\\
		0 & \text{else}
	\end{cases}
\end{equation*}

The following error bound exhibits a trade-off on the $0/1$ loss of a selectron that is controlled by $\gamma$. 

\begin{cor}[error bound with hard margins]\label{t:hardmargin}\eod
	For any $\gamma>0$, with probability at least $1-\delta$,
	\begin{equation*}
		\bE\bO_{(2f_\wt(\x)-1)\cdot\nu(\x)<0} \leq 
		\frac{1}{n}\sum_i \bO_{\big(\wt^\intercal\x^{(i)}-\vartheta\big)\cdot\nu(\x^{(i)})<\gamma} 
		+ \frac{\omega}{\gamma}\cdot\frac{\sqrt{8(N+1)\log(n+1)}+1}{\sqrt{n}}
		+ \sqrt{\frac{2\log\frac{1}{\delta}}{n}}.
	\end{equation*}
\end{cor}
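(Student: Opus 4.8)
The plan is to obtain the bound as a direct corollary of the hard-margin result of \cite{boucheron:05}, recalled here as Corollary~\ref{t:bhm}, by viewing the selectron's threshold function in the ``$\pm1$ coordinate system'' already used in Lemma~\ref{t:radem}. The first point is that, in contrast with the one-sided loss of Theorem~\ref{t:lbound}, the symmetric loss $\bO_{(2f_\wt(\x)-1)\cdot\nu(\x)<0}$ is \emph{not} gated by an output spike, so no discontinuous factor $\bO_{f>0}$ appears and the modified theorem (Theorem~\ref{t:boucheron-adapt}) is not needed: since $\text{sign}(\wt^\intercal\x-\vartheta)=2f_\wt(\x)-1$, this loss is exactly the classical $0/1$ loss $L(g)$ of Corollary~\ref{t:bhm} evaluated at the real-valued margin function $g(\x):=\wt^\intercal\x-\vartheta$, with labels $Y=\nu$. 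This is why the result is only a corollary, included here for completeness.

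Next I would exhibit $g$ as a member of the class ${\mathcal C}_\omega$ used by Corollary~\ref{t:bhm}, reusing the computation in the proof of Lemma~\ref{t:radem}. Writing $\bar g^j:=2g^j-1\in\{\pm1\}$ for the base classifiers, one has $g(\x)=\wt^\intercal\x-\vartheta=\big(\tfrac12\|\wt\|_1-\vartheta\big)+\tfrac12\sum_{j=1}^N\wt_j\,\bar g^j(\x)$, i.e. $g$ is an $\ell_1$-bounded combination of functions in the base class $\cF$ (whose VC-dimension is $V_\cF\le N+1$ for a selectron with $N$ synapses, as in Lemma~\ref{t:radem}) plus a bounded constant. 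As there, we may assume $0\le\vartheta\le\|\wt\|_1\le\omega$ (otherwise the selectron never, resp. always, fires), so the constant obeys $|\tfrac12\|\wt\|_1-\vartheta|\le\tfrac12\omega$; this is precisely the situation in which Lemma~\ref{t:radem} gives ${\mathcal Rad}_n(\cF_\omega)\le\omega\cdot\frac{\sqrt{2(N+1)\log(n+1)}+\tfrac12}{\sqrt n}$. Moreover the empirical hard-margin quantity $\widehat{L}_n^\gamma(g)=\tfrac1n\sum_i\bO_{g(\x^{(i)})\nu(\x^{(i)})<\gamma}$ is verbatim the term $\tfrac1n\sum_i\bO_{(\wt^\intercal\x^{(i)}-\vartheta)\cdot\nu(\x^{(i)})<\gamma}$ appearing in the statement, with the margin parameter $\gamma$ passing through unchanged because $g$ is taken to be $\wt^\intercal\x-\vartheta$ itself (not a rescaled version).

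Finally I would substitute into Corollary~\ref{t:bhm}. The generic hard-margin complexity term is the Rademacher complexity of $\cF_\omega$ scaled by the Lipschitz constant $1/\gamma$ of the ramp $\phi^\gamma$ (up to the usual factor of $2$ from symmetrization/contraction); plugging in the bound from Lemma~\ref{t:radem} produces $\tfrac{2\omega}{\gamma}\cdot\frac{\sqrt{2(N+1)\log(n+1)}+\tfrac12}{\sqrt n}=\tfrac{\omega}{\gamma}\cdot\frac{\sqrt{8(N+1)\log(n+1)}+1}{\sqrt n}$, which is the capacity term in the claimed inequality; the confidence term $\sqrt{2\log(1/\delta)/n}$ is carried over from Corollary~\ref{t:bhm} without change. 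Combining the three pieces yields the bound, valid for each fixed $\gamma>0$ with probability at least $1-\delta$.

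The step I expect to be the main obstacle is matching the constants through the coordinate change: in particular (i) tracking how the $1/\gamma$ Lipschitz factor of $\phi^\gamma$ together with the $\tfrac12$ relating $\wt^\intercal\x-\vartheta$ to $\sum_j\wt_j\bar g^j$ turns the $\sqrt2$ of the VC bound into the stated $\sqrt8$, and (ii) isolating the additive ``$+1$'' as the contribution of the constant (bias) term, exactly as the ``$+\tfrac12$'' arises in Lemma~\ref{t:radem}, while checking that adjoining this constant term leaves the effective capacity $V_\cF\le N+1$ intact so that Corollary~\ref{t:bhm} genuinely applies with the stated parameters. Everything else is a routine substitution into Corollary~\ref{t:bhm}.
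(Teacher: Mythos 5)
Your proposal is correct and follows exactly the route the paper takes, which it compresses into the single line ``Follows from Lemma~\ref{t:radem} and Corollary~\ref{t:bhm}'': apply the hard-margin bound of Corollary~\ref{t:bhm} with the generic $\tfrac{2}{\gamma}\cdot{\mathcal Rad}_n(\cF_\omega)$ capacity term and substitute the Rademacher bound of Lemma~\ref{t:radem}, which is precisely how the $\sqrt{8(N+1)\log(n+1)}+1$ arises. Your observations that the symmetric loss needs no spike-gating (so Theorem~\ref{t:boucheron-adapt} is not required) and that the constants track through the $0/1\to\pm1$ change of coordinates are exactly the details the paper leaves implicit.
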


\begin{proof}
	Follows from Lemma~\ref{t:radem} and Corollary~\ref{t:bhm}.
\end{proof}

As $\gamma$ increases, the size of the margin required to avoid counting towards a loss increases. However, the capacity term is multiplied by $\frac{1}{\gamma}$, and so reduces as the size of $\gamma$ increases.

Thus, the larger the margin, on average, the higher the value we can choose for $\gamma$ without incurring a penalty, and the better the bound in Corollary~\ref{t:hardmargin}.

%%%%%%%%%%%%%%%%%%%%%%%%%%%%%%%%%%
\addtocounter{si-sec}{1}
\section{Proof of Theorem~\ref{t:composition}}
\label{s:composition-proof}

It is helpful to introduce some notation. Given $\x\in\X$, let $S^j_a = \{\x|\x_j=a\}$. Let $\nu_{jk=11}$, defined by equation
\begin{equation*}
	\nu_{jk=11}\cdot \left[\sum_{\x\in  S^j_1} P(\x|\x_j=1) (\wt^\intercal \x-\vartheta)f_\wt(\x)\right]
	= \sum_{\x\in  S^j_1} P(\x|\x_j=1)\nu(\x) (\wt^\intercal \x-\vartheta)f_\wt(\x),
\end{equation*}
quantify the average contribution of neuromodulators when selectrons $n^j$ and $n^k$ both spike. Similarly, let $\nu_{jk=01}$ quantify the average contribution of neuromodulators when $n^k$ spikes and $n^j$ does not, i.e. the sum over $\x\in S^j_0$.

If upstream neurons are reward maximizers then spikes by $n^j$ should predict higher neuromodulatory rewards, on average, than not firing. We therefore assume
\begin{equation}
	\nu_{jk=11}\geq \nu_{j=10}.
	\tag{*}
	\label{e:assumption}
\end{equation}

\begin{thmtag}{\ref{t:composition}}{}
	\emph{
		Let $p_j:=\bE[Y^j]$ denote the frequency of spikes from neuron $n^j$. Under assumption \eqref{e:assumption}, the efficacy of $n^j$'s spikes on $n^k$ is lower bounded by
	\begin{equation*}
		\frac{1}{\nu_{jk=11}}\cdot \frac{\delta R^k}{\delta \x_j}
		\geq \frac{\wt_j\cdot\bE[Y^jY^k]}{p_j}
		+ \frac{2\bE\Big[Y^jY^k\cdot \big((\wt^{\bcancel j})^\intercal \x-\vartheta\big)\Big]}{p_j(1-p_j)}
		- \frac{\bE\Big[Y^k\cdot \big((\wt^{\bcancel j})^\intercal \x-\vartheta\big)\Big]}{1-p_j}
	\end{equation*}	
	where 
	\begin{equation*}
		\wt^{\bcancel j}_i := \begin{cases}
			\wt_i & \text{if }i\neq j\\
			0 & \text{else}.
		\end{cases}
	\end{equation*}
		}
\end{thmtag}

\begin{proof}
	\begin{align*}
		\frac{\delta R^k}{\delta \x_j} & = \bE[R^k|\x_j=1] - \bE[R^k|\x_j=0]\\
		& = \sum_{\x\in  S^j_1} P(\x|\x_j=1)\nu(\x) (\wt^\intercal \x-\vartheta)f_\wt(\x)
		- \sum_{\x\in S^j_0} P(\x|\x_j=0)\nu(\x) (\wt^\intercal \x-\vartheta)f_\wt(\x)\\
		& = \nu_{jk=11}\cdot \sum_{\x\in  S^j_1} P(\x|\x_j=1) (\wt^\intercal \x-\vartheta)f_\wt(\x)
		- \nu_{jk=01}\cdot\sum_{\x\in S^j_0} P(\x|\x_j=0) (\wt^\intercal \x-\vartheta)f_\wt(\x).
	\end{align*}
	Assumption \eqref{e:assumption} implies
	\begin{align*}
		\frac{1}{\nu_{jk=11}}\cdot \frac{\delta R^k}{\delta \x_j} 
		& \geq \sum_{\x\in  S^j_1} P(\x|\x_j=1)(\wt^\intercal \x-\vartheta) f_\wt(\x)
		- \sum_{\x\in S^j_0} P(\x|\x_j=0) (\wt^\intercal \x-\vartheta)f_\wt(\x)\\
		& = \sum_{\x\in  S^j_1} \frac{P(\x)}{P(\x_j=1)} (\wt^\intercal \x-\vartheta)f_\wt(\x)
		- \sum_{\x\in S^j_0} \frac{P(\x)}{P(\x_j=0)} (\wt^\intercal \x-\vartheta)f_\wt(\x)\\
		& = \frac{1}{p_j}\bE\left[Y^jY^k\cdot (\wt^\intercal \x-\vartheta)\right]
		- \frac{1}{1-p_j}\bE\left[(1-Y^j)Y^k \cdot(\wt^\intercal \x-\vartheta)\right]\\
	\end{align*}
	The result follows by direct computation.
\end{proof}

%%%%%%%%%%%%%%%%%%%%%
\addtocounter{si-sec}{1}
\section{Generation of Poisson spike trains}
\label{s:poisson}

Spike trains are generated following \cite{Masquelier:2007vn}. Neurons have 200 synaptic inputs. Time is discretized into $1ms$ bins. At each time step spikes are generated according to a Poisson process where the rate varies as follows:
\begin{enumerate}
	\item a synapse has probability $r\cdot dt$ of emitting a spike where $r$ is clipped in $[0,90]Hz$.
	\item $dr=s\cdot dt$ where $s$ is clipped in $[-1800,1800]Hz$.
	\item the rate of change $ds$ of $s$ is uniformly picked from $[-360,360]Hz$.
\end{enumerate}
The resulting spike train has an average firing rate of about $44Hz$. Masquelier \emph{et al}
also add a mechanism to ensure each synapse transmits a spike after at most $50ms$ which we do not implement.

Repeated patterns are sampled from the process above for $50ms$. The pattern is then cut-and-paste over the original Poisson spike train for $\frac{1}{3}$ of the total number of $50ms$ blocks. The cut-and-pasting is imposed on a randomly chosen (but fixed) subset containing $\frac{1}{2}$ of the neuron's synapses.

%%%%%%%%%%%%%%%%%%%%%%%%%%%%%%%%%%%%%%%%%%%%%%%%%%%%%%%%%%%%%%%%%%%%
%%%%%%%%%%%%%%%%%%%%%%%%%%%%%%%%%%%%%%%%%%%%%%%%%%%%%%%%%%%%%%%%%%%%
%%%%%%%%%%%%%%%%%%%%%%%%%%%%%%%%%%%%%%%%%%%%%%%%%%%%%%%%%%%%%%%%%%%%
%%%%%%%%%%%%%%%%%%%%%%%%%%%%%%%%%%%%%%%%%%%%%%%%%%%%%%%%%%%%%%%%%%%%
%%%%%%%%%%%%%%%%%%%%%%%%%%%%%%%%%%%%%%%%%%%%%%%%%%%%%%%%%%%%%%%%%%%%

\end{document}